%% file: main.tex
\documentclass{article}
\usepackage{graphicx} 
\usepackage{amsmath}
\def\withcolors{1}
\def\withnotes{1}
\usepackage{ccanonne}
\usepackage{multicol}
\usepackage{multirow}
\usepackage{subcaption}
\usepackage{mleftright}
\usepackage{algpseudocode}
\usepackage{physics}
\usepackage[most]{tcolorbox}

\newcolumntype{C}{>{\centering\arraybackslash}p{0.27\textwidth}}
\newcommand{\overlap}[2]{\langle #1, #2 \rangle}
\newcommand{\target}{\rho}
\newcommand{\lab}{\sigma}
\newcommand{\ftarget}[1]{\hat{\target}(#1)}
\newcommand{\flab}[1]{\hat{\lab}(#1)}
\newcommand{\pauliset}{\mathcal{Q}}
\newcommand{\parity}[2]{\chi_{#1}(#2)}
\newcommand{\Var}[1]{\mathrm{Var}\left(#1\right)}
\newcommand{\fest}[2]{\Tilde{F}(#1, #2)}
\newcommand{\basesset}{\mathcal{B}}
\newcommand{\biasnorm}[2]{\mathcal{N}_{#1}(#2)}
\newcommand{\statespace}{\mathcal{D}(\mathbb{C}^d)}
\newcommand{\knap}[1]{\mathcal{W}_\gamma(#1)}
\newcommand{\randgraph}[2]{\mathcal{E}_{#2}(#1)}
\newcommand{\qcommuteset}{\mathcal{S}}
\newcommand{\overlapB}[2]{\overlap{#1}{#2}_B}

\newtheorem{theorem}{Theorem}
\newtheorem{lemma}{Lemma}
\newtheorem{cor}{Corollary}

\newtheorem{observation}{Observation}
\crefname{observation}{Observation}{observations}

\newtcolorbox{basicprotocol}[2]{
   colframe=black,       
    colback=white,        
    colbacktitle=white,   
    coltext=black,        
    coltitle=black,       
    boxrule=0.5pt,
  fonttitle=\bfseries,      
  title=#1 \textmd{(#2)},                  
  sharp corners
}

\newtcolorbox{mainprotocol}[2]{
   colframe=black,       
    colback=white,        
    colbacktitle=white,   
    coltext=black,        
    coltitle=black,       
    boxrule=0.5pt,
  fonttitle=\bfseries,      
  title=#1 \textmd{(#2)},                  
  sharp corners
}

\newtcolorbox{extraprotocol}[2]{
   colframe=black,       
    colback=white,        
    colbacktitle=white,   
    coltext=black,        
    coltitle=black,       
    boxrule=0.5pt,
  fonttitle=\bfseries,      
  title=#1 \textmd{(#2)},                  
  sharp corners
}

\input{locdef}

\title{Sublinear Copies Suffice for Fidelity Estimation with Pauli Measurements}
\author{
    \begin{tabular}[t]{C@{\extracolsep{6.5em}} C}
   Jayadev Acharya &Abhilash Dharmavarapu \\
 Cornell University & Cornell University\\ 
\small \texttt{acharya@cornell.edu} &\small \texttt{ad2255@cornell.edu} 
\end{tabular}
\vspace{2ex}\\
\begin{tabular}[t]{C@{\extracolsep{6.5em}} C}
    Yuhan Liu & Nengkun Yu \\
Rice University & Stony Brook University\\ 
\small \texttt{yuhan-liu@rice.edu} &\small \texttt{nengkun.yu@cs.stonybrook.edu} 
\end{tabular}}

\begin{document}

\maketitle

\begin{abstract}
We present a protocol that estimates the quantum fidelity, up to precision $\eps$, between a known target state and unknown lab-prepared state with sublinear, $o(d^{0.9908}/\eps^2)$, number of Pauli basis measurements.


\end{abstract}
\input{introduction}

\input{related}

\input{preliminaries}

\input{warmup}

\input{paulibasis}
\input{additional}

\section{Acknowledgments}
  The authors did not consult AI for the ideas and proofs presented in this paper. AD thanks Saravanan Kandasamy for the insightful discussions. JA, AD and NY are supported by the National Science Foundation under Grant No. CCF-2553759. YL is supported by the Naval Research (ONR) grant N00014-23-1-2737.

\bibliography{refs}
\bibliographystyle{alpha}
\end{document}

%% file: locdef.tex
\newcommand{\unif}{{\mathbf{u}}}

\newcommand{\nqubits}{{N}}

\newcommand{\x}{\mathbf{x}}

\def\multiset#1#2{\ensuremath{\left(\kern-.3em\left(\genfrac{}{}{0pt}{}{#1}{#2}\right)\kern-.3em\right)}}

\newcommand{\POVM}{\mathcal{M}}

%% file: introduction.tex
\section{Introduction}

Recent experiments have seen success in preparing multipartite entangled states in superconducting-qubit processors, trapped-ion systems, and neutral-atom arrays \cite{Cao2023,Moses2023,Shaw2024}. It is becoming increasingly more important to certify that the physical systems are realized as anticipated \cite{PhysRevLett.124.010504}. 
A natural approach to this problem is quantum state tomography, which reconstructs a complete mathematical description of the system. Although powerful, tomography can be quite resource intensive, as it requires many measurements to accurately reconstruct the complete mathematical description of the state. Quantum fidelity therefore serves as a widely used measure of experimental quality, since estimating it for a specified target state can be more practical than reconstructing the full state. As a result, there has been large body of work aiming to answer the question: 
\begin{center}
How to estimate the fidelity between a known state, $\ket{\psi}$, and an unknown $\sigma$?
\end{center}
This problem can be seen as equivalent to realizing the measurement $\{\op{\psi}{\psi}, I-\op{\psi}{\psi}\}$. However, preparing this measurement is often just as difficult as preparing $\op{\psi}$ itself. Fidelity estimation starts to become more interesting when considering quantum measurement settings that are easily realizable in the lab. One of the more experimentally friendly measurement settings is Pauli measurements:  \cite{flamia2011direct,PhysRevLett.107.210404} shows that a linear number of samples, $O(d)$, of $\sigma$ suffices for fidelity estimation, i.e., output $\langle \psi|\sigma\ket{\psi}\pm \eps$, using only a constant number of different measurements. Since then, an intriguing question has arisen:

\begin{center}
    Is $\bigTheta{d}$ the fundamental barrier for fidelity estimation with Pauli measurements?
\end{center}

 It turns out that it isn't! We provide a protocol that estimates the quantum fidelity, up to precision $\eps$, between a known state and a sublinear number of copies, $o(d^{0.9908}/\eps^2)$, of an unknown state using Pauli basis measurements. Our protocol also works for mixed states, where we produce an estimate of $\tr(\rho\sigma)$ given a potentially mixed target state $\rho$.

The protocol is built on the wisdom of quantum experimentalists: Pauli observables containing identity do not require separate measurement choices; their expectation values can be obtained by marginalizing the outcomes of full-weight Pauli measurements. Thus, it suffices to use the \(3^\nqubits\) full-weight Pauli basis measurements, rather than measuring all \(4^\nqubits\) Pauli strings separately. These ideas follow from observations made by~\cite{yu2020sampleefficienttomographypauli} and has been recently shown to be useful for breaking the previous barriers of quantum state tomography \cite{ADLY2025Paulinot,grewal2026paulpure}, and certification \cite{Yu2023AlmostTight,acharya2026nearoptimalmixednesstestingpauli}.

\subsection{Problem Setup}
Given $n$ copies of $\lab$ and a state description of $\target$, the goal is to design a measurement scheme $\POVM^n = (\POVM_1, \ldots, \POVM_n)$ to estimate the fidelity $\overlap{\target}{\lab}$ with precision $\eps$ and high constant success probability.
\begin{align*}
    \Pr[|f(\POVM^n(\lab^{\otimes n}), \target) - \overlap{\target}{\lab}| \leq \eps] \geq 1-\delta,
\end{align*}
where $\POVM^n(\lab^{\otimes n})$ is the outcomes of measuring each copy of $\lab$ with the $(\POVM_1, \ldots, \POVM_n)$ and $f(\POVM^n(\lab^{\otimes n}), \target)$ is the output of the protocol given the state description and measurement outcomes. Furthermore, we restrict our problem to Pauli basis measurements, i.e. $\POVM_i$ is a Pauli basis measurement for all $i \in [n]$.

\subsection{Results}
We highlight the main result of this paper: a fidelity estimation protocol that uses a sublinear number of Pauli basis measurements.
\begin{theorem}[Sublinear Fidelity Estimation with Pauli Basis Measurements, informal] 
    There exists a protocol that estimates fidelity $\tr(\rho\sigma)$ with
    \begin{align*}
    n = o(d^{0.9908}/\eps^2)
    \end{align*}
    copies of $\lab$ using Pauli basis measurements, where measurements are chosen independent of $\target$.
\end{theorem}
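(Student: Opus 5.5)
We will estimate $\tr(\rho\sigma)$ through its Pauli expansion. Write $\rho = \frac{1}{d}\sum_{P} \ftarget{P} P$ and $\sigma = \frac{1}{d}\sum_P \flab{P} P$, where $\ftarget{P}=\tr(\rho P)$, $\flab{P}=\tr(\sigma P)$ and $d=2^\nqubits$. Then
\[
\tr(\rho\sigma)=\frac1d\sum_P \ftarget{P}\flab{P}.
\]
Purity gives $\sum_P \ftarget{P}^2\le d$, and the same bound holds for $\sigma$.

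\textbf{Baseline.} Direct fidelity estimation samples $P$ with probability $\ftarget{P}^2/d$ and estimates $\flab{P}/\ftarget{P}$. This costs $O(d)$ copies because of Paulis with small $|\ftarget{P}|$.

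\textbf{Key idea: measure in bases, not Paulis.} A single full-weight basis $B\in\{X,Y,Z\}^\nqubits$ yields unbiased estimates of $\flab{P}$ for every $P$ in the $2^\nqubits$-element set of Paulis that agree with $B$ off their identity positions. These estimates come from the same outcome string, as parities $\parity{S}{x}$. So the plan is:
\begin{itemize}
\item choose $B$ uniformly at random, or from a product distribution biased toward $\rho$'s structure but independent of $\rho$, as the statement requires;
\item measure one copy of $\sigma$ in $B$, obtaining $x$;
\item output the unbiased estimator
\[
\fest{B}{x}=\frac1d\sum_{P\,\text{compatible with}\,B}\frac{\ftarget{P}\,\parity{\mathrm{supp}P}{x}}{\Pr[P\text{ compatible with }B]}.
\]
\end{itemize}
A weight-$w$ Pauli is compatible with $B$ with probability $3^{-w}$. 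Averaging $n$ copies therefore gives error $\sqrt{\Var{\fest{B}{x}}/n}$. It suffices to show $\Var{\fest{B}{x}}=O(d^{0.9908})$, and probably a refined version of this.

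\textbf{Variance computation.} The second moment expands as
\[
\frac{1}{d^2}\sum_{P,Q}\ftarget{P}\ftarget{Q}\,\flab{PQ}\,\frac{\Pr[P,Q\text{ both compatible}]}{\Pr[P]\Pr[Q]}.
\]
Here "both compatible" requires $P$ and $Q$ to agree (qubit-wise commute) wherever both are non-identity. The ratio is $3^{|\mathrm{supp}P\cap \mathrm{supp}Q|}$ when they agree and $0$ otherwise. We then bound this by Cauchy--Schwarz, or by a Schur-test-style bound, against the kernel $K(P,Q)=3^{|P\cap Q|}\mathbf 1[\text{qubit-wise commute}]$ restricted to the support, using $\sum_P\ftarget{P}^2\le d$.

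The uniform choice gives only $O(d)$ in the worst case, for example for stabilizer states weighted on high-weight Paulis. We therefore split $\rho$'s Pauli mass by weight:
\begin{itemize}
\item \emph{Low-weight part.} Paulis of weight $\le \gamma\nqubits$ have a small reweighting factor.
\item \emph{High-weight part.} There are at most $\binom{\nqubits}{\ge\gamma\nqubits}3^{w}$ such Paulis. Their contribution is handled by the classic DFE sampling, now restricted, or by the bound $\ftarget{P}^2\le 1$ combined with counting.
\end{itemize}
Balancing the two costs with an entropy optimization over $\gamma$ (a knapsack-type optimization, as the macro $\knap{\cdot}$ suggests) should give an exponent strictly less than $1$, presumably $0.9908$.

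\textbf{Main obstacle.} The hard step is bounding the cross terms $\sum_{P,Q}$ in the variance. The Cauchy--Schwarz step $|\flab{PQ}|\le 1$ is too lossy. I would first try bounding the matrix $[\ftarget{P}\ftarget{Q}K(P,Q)]$ by its operator norm on the qubit-wise-commuting structure. This kernel tensorizes over qubits as a $4\times 4$ matrix per qubit, with entries $1$ off the identity row and column, $3$ on the matching diagonal, and $0$ for mismatched non-identity pairs. Its spectral radius is controlled per qubit, and this restricted per-qubit spectrum, after the weight truncation, is where the $d^{0.9908}$ exponent should emerge.
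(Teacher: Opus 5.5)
\textbf{Gap: the sublinear variance bound is never proved, and the route you propose for it can only give $d$.} Your estimator and second-moment expansion are the paper's: Protocol P1 with inverse-probability weights $3^{w(P)}$, and your kernel $3^{|P\cap Q|}\mathbf{1}[\text{compatible}]$ is the paper's weight $3^{w(Q\cap Q')}$ up to normalization.

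Your proposed tool is the per-qubit spectrum of the tensorized kernel. The per-qubit kernel $K_1=\left[\begin{smallmatrix}3&0&0&1\\0&3&0&1\\0&0&3&1\\1&1&1&1\end{smallmatrix}\right]$ has spectrum $\{4,3,3,0\}$, with the all-ones vector on top. So $\|K_1^{\otimes\nqubits}\|=4^\nqubits$, and an operator-norm or Schur bound gives $d^{-2}\cdot 4^\nqubits\cdot\sum_P\ftarget{P}^2\le d$. That is linear.

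This is not slack in your method. Once you use $|\flab{PQ}|\le 1$, or otherwise ignore $\sigma$, the relaxed problem with only $|\ftarget{P}|\le 1$ and $\sum_P\ftarget{P}^2\le d$ has value exactly $d$. The flat profile $|\ftarget{P}|=2^{-\nqubits/2}$ on all $4^\nqubits$ Paulis attains it, since $d^{-2}\cdot 2^{-\nqubits}\cdot 16^{\nqubits}=d$. Hence no argument that uses only $\rho$'s constraints can reach an exponent below $1$.

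Your diagnosis of the obstruction is also off. The diagonal terms $P=Q$, which carry the high-weight mass, contribute at most $d^{-2}\sum_P\ftarget{P}^2 3^{w(P)}\le(3/2)^\nqubits$. For stabilizer targets such as GHZ the whole variance is $\Theta((3/2)^\nqubits)$. The paper in fact shows the uniform protocol itself is sublinear for every $\rho,\sigma$, so the protocol needs no modification.

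\textbf{The weight-split repair does not close the gap.} The flat extremal profile sits at weight about $3\nqubits/4$, and the relaxed value restricted to weight $\ge\gamma\nqubits$ (for any $\gamma<3/4$) is still about $d$. So a weight split leaves the entire difficulty in the high-weight part.
\begin{itemize}
\item Counting with $\ftarget{P}^2\le 1$ gives nothing there, because almost all $4^\nqubits$ Paulis are high-weight.
\item Restricted DFE has second moment about $|\mathrm{supp}|/d$, which can still be $\Theta(d)$.
\item DFE also samples settings from $\ftarget{P}^2$, which violates the requirement that measurements be independent of $\rho$.
\end{itemize}

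\textbf{Missing idea: keep $\flab{P\Delta Q}$ and use $\sigma$'s purity.} In the paper's normalization $\ftarget{P}=\mathrm{tr}(\rho P)/\sqrt d$, the argument runs as follows.
\begin{itemize}
\item H\"older over compatible pairs, with exponents $\frac{2}{1+\gamma}$ and $\frac{2}{1-\gamma}$, gives $\biasnorm{\lab}{\target}\le\sqrt{\randgraph{\target}{\gamma}^{1+\gamma}\knap{\lab}^{1-\gamma}/d}$.
\item The $\rho$-factor is a quadratic form in $A=\frac13K_1$, essentially your kernel. The paper shows $\frac34A\preceq BSC_p^{\otimes 2}$ with $p=\frac{7-\sqrt{33}}{16}$, so the form becomes a Bonami--Beckner operator $T_\lambda$ with $\lambda=\frac{1+\sqrt{33}}{8}$.
\item Hypercontractivity then bounds $\|T_{\sqrt\lambda}\nu\|_2$ by $\|\nu\|_{1+\lambda}$. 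This lets $\|\hat\rho\|_\infty\le d^{-1/2}$ enter, which the plain spectral bound cannot exploit.
\item The $\sigma$-factor collapses to $(10/3)^\nqubits\sum_Z|\flab{Z}|^{2/(1-\gamma)}5^{-w(Z)}$. A fractional-knapsack argument under $\sigma$'s $\ell_2$ and $\ell_\infty$ constraints bounds $\sum_Z\flab{Z}^2 5^{-w(Z)}$ by $\nqubits^{3/2}5^{-\alpha\nqubits}$, with $\alpha=D^{-1}_{3/4}(\ln 2)$. The $\ell_\infty$ constraint alone gives only $(4/5)^\nqubits$.
\item Setting $\gamma=\lambda$ gives $2^{\lambda\nqubits}\sqrt{5}^{(1-\lambda)(1-\alpha)\nqubits}\approx d^{0.9908}$.
\end{itemize}
Your plan uses neither $\sigma$'s constraints nor hypercontractivity. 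As written, it establishes only the $O(d/\eps^2)$ baseline.
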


\noindent The sublinear protocol establishes an unbiased estimate of the fidelity from the uniform sampling of Pauli bases. Then, we utilize the purity and p.s.d constraints of $\target$ to globally bound the variance for all target states. This global bound is achieved by using techniques in hypercontractivity to bound the expected edge-weight of random Pauli graphs. 

An immediate consequence of the estimation of fidelity is that we can distinguish between the cases when $\overlap{\target}{\lab} \geq 1-\frac{\eps}{C}$ and $\overlap{\target}{\lab} \leq 1-\eps$ for some constant $C > 1$. This task is known as \emph{robust certification}, and it can be performed with a sublinear number of Pauli basis measurements using our fidelity estimation protocol.
\begin{cor}[Robust Sublinear Certification with Pauli Basis Measurements]
For a constant $C > 1$, there exists a protocol that 
\begin{center}
   \begin{itemize}
       \item Outputs \textbf{accept} with probability at least $1-\delta$ if $\overlap{\target}{\lab} \geq 1-\frac{\eps}{C}$.
       \smallskip
       \item Outputs \textbf{reject} with probability at least $1-\delta$ if $\overlap{\target}{\lab} \leq 1- \eps$.
   \end{itemize} 
\end{center}
   \smallskip
using $n=o(d^{0.9908}/\eps^2 \cdot \log(1/\delta))$ copies of $\lab$ using Pauli basis measurements, where measurements are chosen independent of $\target$.
\end{cor}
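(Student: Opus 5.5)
The plan is to reduce the corollary directly to the main fidelity estimation theorem, run at a constant-factor finer precision and with its success probability boosted by a median trick. Set the threshold $\tau = 1 - \frac{\eps}{2}\left(1+\frac{1}{C}\right)$, which lies strictly between the two cases. The gap from $\tau$ to either case is $\eta := \frac{\eps}{2}\left(1-\frac{1}{C}\right)$, and this is $\Theta(\eps)$ because $C>1$ is a constant.

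First, I will invoke the sublinear fidelity estimation theorem with precision $\eta$ and a constant failure probability, say $1/4$. This produces an estimate $\hat F$ with $|\hat F - \overlap{\target}{\lab}| \le \eta$ with probability at least $3/4$, using $n_0 = o(d^{0.9908}/\eta^2) = o(d^{0.9908}/\eps^2)$ copies. The constant $(1-1/C)^{-2}$ is absorbed into the $o(\cdot)$.

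Second, to reach confidence $1-\delta$, I will repeat this independently $k = O(\log(1/\delta))$ times on fresh batches of copies and take the median $\tilde F$ of the $k$ estimates. The median lies outside $[\overlap{\target}{\lab}-\eta, \overlap{\target}{\lab}+\eta]$ only if at least half of the runs fail. By a Chernoff--Hoeffding bound this has probability $e^{-\Omega(k)} \le \delta$. The total copy count is $k n_0 = o(d^{0.9908}/\eps^2 \cdot \log(1/\delta))$. Every run uses only Pauli basis measurements chosen independently of $\target$, so the combined protocol does as well; only the classical postprocessing uses the description of $\target$.

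Third, the decision rule is: accept if $\tilde F \ge \tau$ and reject otherwise.
\begin{itemize}
\item If $\overlap{\target}{\lab} \ge 1-\eps/C$, then on the good event $\tilde F \ge 1-\eps/C - \eta = \tau$, so the protocol accepts.
\item If $\overlap{\target}{\lab} \le 1-\eps$, then on the good event $\tilde F \le 1-\eps+\eta = \tau$, so the protocol rejects. To make this strict, I will shrink $\eta$ by a further constant factor (for example to $\eta/2$), which gives a strict margin on both sides without affecting the asymptotics.
\end{itemize}

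There is no real obstacle in this argument. The only point needing care is that the main theorem's guarantee is uniform over all $\target$ and $\lab$, so that the constant-probability guarantee holds for every run and the median amplification applies. This uniformity is exactly what the global variance bound supplies.
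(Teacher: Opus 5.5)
Your proposal is correct and is essentially the paper's argument. The paper treats the corollary as an immediate consequence of the fidelity estimator: estimate $\overlap{\target}{\lab}$ to precision $\Theta(\eps)$, which is affordable since $C>1$ is constant, and threshold between $1-\eps$ and $1-\eps/C$; the $\log(1/\delta)$ factor comes from a standard black-box amplification. Your median-then-threshold step is equivalent to that amplification (majority vote over a constant-confidence tester), and your explicit threshold $\tau$, margin $\eta$, and strict-inequality fix supply details the paper leaves implicit.
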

\noindent The $\log(1/\delta)$ factor comes from a standard black-box amplification argument \cite[Lemma 1.1]{canonne2022topics}. Therefore, the results provide a worst-case version of the recent work~\cite{coladangelo2026robust}.

%% file: related.tex
\section{Related work}
\paragraph{Tomography.} The goal of Quantum State Tomography (QST) is to reconstruct the density matrix of a state given that we have measurement access to multiple copies of it. In the unrestricted setting, known as entangled measurements, it has been shown that the copy-complexity is $\bigTheta{rd/\eps^2}$, where $r$ is the rank of the density matrix~\cite{HaahHJWY17, ODonnellW16, ODonnellW17, scharnhorst2025}. 

There has been an increasing interest in performing QST in more restricted, yet practical measurement settings due to the complexity of performing large entangled measurements over all the copies. Single-copy measurements is an example of such setting, as the measurements are performed independently on each copy of the state. It has been shown that the copy-complexity of single-copy QST is $\tildeTheta{dr^2/\eps^2}$~\cite{HaahHJWY17, lowe2022lower, guctua2020fast, KRT14}. Furthermore, the results hold even when measurements can be adaptively chosen from the previous measurement outcomes~\cite{nayak2026optimallowrankquantumstate, chen2023does}. Other lines of work also investigate the precise tradeoffs of limited entanglement on the copy-complexity of QST~\cite{Chen0L24memory,nayak2026optimallowrankquantumstate}.

Single-qubit measurements are even simpler than single-copy measurements, as they are performed individually on each measurement. In this setting, particularly with the subclass of Pauli basis measurements, \cite{ADLY2025Paulinot, acharya2025single} has nearly resolved the copy-complexity to be $\tildeTheta{10^\nqubits/\eps^2}$. It also has been been shown that, rather surprisingly, Pauli tomography on pure states nearly match the copy-complexity of entangled and single-copy measurements~\cite{grewal2026paulpure}.
\paragraph{Certification.} Oftentimes, it is not necessary to fully construct the density matrix to verify degradation of a state. It may be sufficient to test whether or not a state is $\eps$-far, in trace distance, from some known target state. This simpler problem is known as Quantum State Certification and is a fundamental problem in quantum property testing. In the entangled measurement setting, it has been shown that the copy-complexity of this task is $\bigTheta{d/\eps^2}$~\cite{BadescuO019, OW15}. In the single-copy setting, it has been shown that the copy-complexity is $\bigTheta{d^{3/2}/\eps^2}$~\cite{BubeckC020, Chen0HL22}, which holds for adaptively chosen measurements. Target-dependent copy-complexity results have also been established under these measurement settings~\cite{ChenLO22instance, odonnell2025instanceoptimalquantumstatecertification}. In addition, it has recently been shown that the copy-complexity of single-qubit/Pauli mixedness testing (certification of the maximally-mixed state) is $\tildeTheta{\sqrt{10}^\nqubits/\eps^2}$\cite{acharya2026nearoptimalmixednesstestingpauli}.

The characterization of pure state certification has also been a problem that has garnered significant interest. Under the entangled and single-copy settings, it is clear that the copy-complexity is $\bigTheta{1/\eps^2}$. However, the problem becomes more interesting when considering single-qubit/Pauli basis measurements. Recently, \cite{coladangelo2026power} have shown that $\bigTheta{1/\eps^2}$ copies suffice to certify $1-2^{-\Omega(\nqubits)}$ fraction of all states with Pauli basis measurements, improving on the existing $\tildeTheta{1/\eps^2}$ result from~\cite{preskill2024certify}. In addition, \cite{gupta2026suffice} has shown that $\tildeTheta{1/\eps^2}$ copies suffice to verify all states with adaptive single-qubit measurements, while demonstrating a exponential lower bound for non-adaptive single-qubit protocols. It can be seen that qubit-wise adaptivity provides an exponential advantage over their non-adaptive counterparts.

\paragraph{Fidelity and Distributed Inner Product Estimation.} Estimating the fidelity, $\overlap{\target}{\lab}$, allows us to precisely describe how far away one state is from the other. As a result, there has been a large interest in estimating this quantity to efficiently verify quantum devices. Direct Fidelity Estimation (DFE) has been proposed efficiently verify the gap between a known target state and unknown lab prepared state, given we can perform two-outcome Pauli observable measurements on multiple copies of the lab prepared state~\cite{PhysRevLett.106.230501}. The authors have shown that this task can be done with $\bigO{d/\eps^2}$ copies. Subsequent works discover the effects of Pauli grouping on DFE~\cite{barel2026optimizingresourceboundsdirect, barberaRodriguez2025sampling, barberàrodríguez2026directfidelityestimationjoint}. \cite{fawzi2026optimal} provides nearly tight bounds for DFE dependent on the target state. Fidelity estimation has also been considered with Pauli basis measurements, where the authors from~\cite{sun2025efficientfidelity} show that fidelity estimation can be performed for $1-2^{-\Omega(\nqubits)}$ fraction of states with $\tildeTheta{1/\eps^2}$ copies.~\cite{coladangelo2026power} improved testing robustness for $1-2^{-\Omega(\nqubits)}$ fraction of states with adaptive single-qubit measurements.

Another related problem is Distributed Inner Product estimation (DIPE), where the goal is to estimate $\overlap{\target}{\lab}$ using multiple copies of $\target$ and $\lab$. When it was first introduced in the unrestricted measurement setting, the authors of~\cite{anshu2022distributed} show that $\bigTheta{1/\eps^2 \lor \sqrt{d}/\eps}$ copies are necessary and can be attained with single-copy measurements. \cite{hinsche2025efficient} demonstrates efficient DIPE for states with limited entanglement and magic via a Pauli sampling protocol that is reminiscent of the original DFE paper. \cite{zheng2026distributed, huang2026local} investigates the average-case DIPE under random structured circuits and worst-case DIPE under local randomized measurements, respectively.

%% file: preliminaries.tex
\section{Preliminaries}
\subsection{Quantum Mechanics}
A $\nqubits$-qubit pure state $\ket{\phi}$ is represented as a complex unit vector in a $d$-dimensional Hilbert space, where $d = 2^\nqubits$. A mixed state is $\target$ is represented as the probabilistic ensemble of pure states $\sum_{i=1}^r \lambda_i \ket{\phi_i}\bra{\phi_i}$ living in $\mathbb{C}^{d \times d}$. We refer to this matrix as the \emph{density matrix}. Therefore, every quantum state can be represented as a hermitian, positive semi-definite linear operator with unit trace.
\begin{align*}
   \statespace \eqdef \left\{\target \in Herm\left(\mathbb{C}^{d \times d}\right) \mid \Tr[\target] = 1 \land \target \succeq 0\right\}.
\end{align*}
Consider when $\nqubits = 1$, then the following matrices form an orthogonal basis for $\statespace$ under the inner product $\overlap{\target}{\lab} = \Tr[\target \lab]$,
\begin{align*}
    I=\begin{bmatrix}1 &0\\0&1\end{bmatrix}, X=\begin{bmatrix}0 &1\\1&0\end{bmatrix}, Z=\begin{bmatrix}1 &0\\0&-1\end{bmatrix}, Y=\begin{bmatrix}0 &i\\-i&0\end{bmatrix}.
\end{align*}
These matrices are known as the \emph{Pauli matrices}, introduced by Wolfgang Pauli in 1927~\cite{pauli1927quantenmechanik}. Every single-qubit state can be represented in terms of the Pauli matrices.
\begin{align*}
    \target = \frac{I}{2} + \frac{ \left(\ftarget{X} \cdot X + \ftarget{Y} \cdot Y + \ftarget{Z} \cdot Z \right)}{\sqrt{2}} ,
\end{align*}
with necessary constraints $\|\hat{\target}\|_\infty \leq \frac{1}{\sqrt{2}}$ and $\|\hat{\target}\|_2 \leq 1$. When $\nqubits > 1$, we can extend the basis by tensor product,
\begin{align*}
    \mathcal{D}\left(\bigotimes_{i=1}^k C^{d_i}\right) = \mathcal{D}\left(C^{\Pi_{i=1}^k d_i}\right),
\end{align*}
which represents the composition of multiple quantum systems. This results in $\statespace$ being described by the basis formed by the tensor product of Pauli matrices.
\begin{align*}
    \target = \frac{1}{\sqrt{d}} \sum_{Q \in \pauliset} \ftarget{Q} \cdot Q,
\end{align*}
where $\pauliset \eqdef \{X,Y,Z,I\}^{\otimes \nqubits}$ and $\ftarget{I} = \frac{1}{\sqrt{d}}$. We will refer to $\{\ftarget{Q}\}_{Q \in \pauliset}$ as the Pauli coefficients of $\target$. Similar constraints apply for the Pauli coefficients as the $\nqubits = 1$ case.
\begin{observation}[Pauli Coefficient Constraints]\label{obs:pauli-constraints}
    Given an $\nqubits$-qubit state $\target$ with Pauli coefficients $\{\ftarget{Q}\}_{Q \in \pauliset}$, the following holds true
    \begin{align*}
       \|\hat{\rho}\|_\infty \leq \frac{1}{\sqrt{d}}, \quad \|\hat{\rho}\|_2 \leq 1.
    \end{align*}
\end{observation}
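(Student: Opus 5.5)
The plan is to read both constraints off the Pauli expansion, using two facts: the coefficient extraction formula given by orthogonality of the Pauli basis, and elementary spectral facts about density matrices.

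\textbf{Extracting a coefficient.} Each $Q \in \pauliset$ is a tensor product of single-qubit Paulis, so $Q$ is Hermitian, unitary, and $Q^2 = I$. Moreover $\Tr[QQ'] = d\,\delta_{Q,Q'}$, since the trace factorizes over the tensor product and the single-qubit Paulis are trace-orthogonal with $\Tr[P^2]=2$. Multiplying the expansion $\target = \frac{1}{\sqrt{d}}\sum_{Q'} \ftarget{Q'}\, Q'$ by $Q$ and taking the trace gives
\begin{align*}
\ftarget{Q} = \frac{1}{\sqrt{d}} \Tr[\target Q].
\end{align*}
In particular $\ftarget{I} = \frac{1}{\sqrt{d}}\Tr[\target] = \frac{1}{\sqrt{d}}$, which is consistent with the stated normalization.

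\textbf{The $\ell_\infty$ bound.} Since $Q$ is Hermitian and unitary, its eigenvalues lie in $\{\pm 1\}$, so $\|Q\|_{op} = 1$. Write $\target = \sum_i \lambda_i \op{\phi_i}{\phi_i}$ with $\lambda_i \ge 0$ and $\sum_i \lambda_i = 1$. Then
\begin{align*}
|\Tr[\target Q]| = \Bigl|\sum_i \lambda_i \bra{\phi_i} Q \ket{\phi_i}\Bigr| \le \sum_i \lambda_i = 1.
\end{align*}
Dividing by $\sqrt{d}$ gives $|\ftarget{Q}| \le \frac{1}{\sqrt{d}}$ for every $Q$.

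\textbf{The $\ell_2$ bound.} Apply Parseval to the orthogonal basis:
\begin{align*}
\Tr[\target^2] = \frac{1}{d}\sum_{Q,Q'} \ftarget{Q}\ftarget{Q'}\Tr[QQ'] = \sum_{Q}\ftarget{Q}^2 = \|\hat\target\|_2^2 .
\end{align*}
The coefficients are real because $\target$ and $Q$ are both Hermitian, so $\Tr[\target Q] = \overline{\Tr[\target Q]}$. The purity satisfies $\Tr[\target^2] = \sum_i \lambda_i^2 \le \bigl(\sum_i \lambda_i\bigr)^2 = 1$, since every $\lambda_i \ge 0$. Hence $\|\hat\target\|_2 \le 1$, with equality exactly for pure states.

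Neither step is a real obstacle. The only point needing care is the normalization: one must check that $\Tr[QQ'] = d\,\delta_{Q,Q'}$ combined with the $\frac{1}{\sqrt{d}}$ prefactor makes the Pauli map an isometry, so that the $\ell_2$ norm of the coefficients equals the Frobenius norm of $\target$.
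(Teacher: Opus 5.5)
Your argument is correct and matches the reasoning the paper relies on; the paper states this observation without a separate proof. Your $\ell_2$ bound is exactly the paper's identity $\Tr[\target^2] = \sum_{Q} \ftarget{Q}^2 = \|\hat{\target}\|_2^2$ combined with purity at most $1$, and your $\ell_\infty$ bound $|\Tr[\target Q]| \le 1$ is the same fact that makes $\frac{1}{2} \pm \frac{\sqrt{d}\,\ftarget{Q}}{2}$ valid outcome probabilities for the Pauli observable measurement, with your check $\Tr[QQ'] = d\,\delta_{Q,Q'}$ filling in the normalization detail the paper leaves implicit.
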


An interesting consequence of the Pauli basis is that much of the geometry of the state space can be described in terms of the Pauli coefficients. In particular, the inner product and norm are represented as
\begin{align*}
    \Tr[\target \lab] = \sum_{Q \in \pauliset]} \ftarget{Q} \cdot \flab{Q}, \; \Tr[\target^2] = \sum_{Q \in \pauliset]} \ftarget{Q}^2 = \|\hat{\target}\|_2^2.
\end{align*}
$\Tr[\target^2]$, known as the \emph{purity} of a quantum state, is especially relevant in quantum information, as it serves to gauge how close $\target$ is to being a pure state. The quantity is maximized at $1$ when $\target$ is pure. 
\subsection{Quantum Measurement}
\paragraph{Positive Operator-Valued Measure.}\label{sec:measure} A measurement of a quantum state is represented as a positive operator-valued measure (POVM). A POVM consists of a countable set of operators $\{M_x\}_{x \mathcal{X}}$ such that it satisfies the properties
\begin{align*}
    \sum_{x \in \mathcal{X}} M_x = I,\; \forall_{x \in \mathcal{X}} M_x \succeq 0.
\end{align*}
 By \emph{Born's Rule}, the outcome of a POVM follows the distribution
\begin{align*}
\Pr[X = x] = \Tr[\target M_x]
\end{align*}
for $x \in \mathcal{X}$. The POVM constraints ensure normalization and non-negativity of the probability distribution.
\paragraph{Pauli Measurements.} The measurements of interest throughout this paper are Pauli measurements. One of the simplest measurements is the \emph{Pauli observable measurement}, described by the two-outcome POVM:
\begin{align*}
    M^{Q}_{1} = \frac{I + Q}{2},\; M^{Q}_{-1} = \frac{I - Q}{2},
\end{align*}
where $Q \in \pauliset$. If we apply Born's rule, we get following outcome distribution,
\begin{align*}
    X_Q = \begin{cases}
        1 & \text{w.p } \frac{1}{2} + \frac{\sqrt{d} \ftarget{Q}}{2}\\
        -1 & \text{w.p }\frac{1}{2} - \frac{\sqrt{d} \ftarget{Q}}{2}
    \end{cases},
\end{align*}
Thus, $\expectDistrOf{\target}{X_Q} = \sqrt{d} \ftarget{Q}$. So, performing a two-outcome Pauli observable measurement $\POVM_Q$ on $\target$ results in learning about $\ftarget{Q}$.

An extension of Pauli observable measurements is \emph{Pauli basis measurements}. Instead of observing 2-outcomes, the Pauli basis measurement observes $d=2^\nqubits$ outcomes by independently measuring each qubit in the Pauli basis. The POVM is described below,
\begin{align*}
    M^B_{x} = \bigotimes_{i=1}^\nqubits \frac{I + x_i B_i}{2}, \; x \in \{-1,1\}^\nqubits, B \in \basesset,
\end{align*}
where $\basesset = \{X,Y,Z\}^{\otimes \nqubits}$ and $B = \bigotimes_{i=1}^\nqubits B_i$. When we apply Born's rule, we obtain the following,
\begin{align*}
    \Pr[X_B = x] = \frac{1}{\sqrt{d}} \sum_{S \subseteq [\nqubits]} \ftarget{B^S} \parity{S}{x},
\end{align*}
where $B^S = \bigotimes_{i=1}^\nqubits \indic{i \in S} B_i + \indic{i \notin S} I$ and $\parity{S}{x} = \prod_{i \in S} x_i$. We refer to $\parity{S}{x}$ as the $S$-\emph{parity function} on $x$. We say $Q \triangleright B$ if all of $Q$'s non-identity tensor entries match those of $B$. An important fact is that $\expectDistrOf{\target}{\parity{S}{x}} = \sqrt{d} \ftarget{B^S}$, which means that a single measurement in the basis $B$ learns about $d = 2^\nqubits$ Pauli coefficients.

\subsection{Pauli Notations}
We introduce some notation for operations/quantities associated with Pauli observables. First, we define the \emph{Pauli weight}.
\begin{align*}
    w(Q) = \sum_{i = 1}^\nqubits \indic{Q_i \neq I}.
\end{align*}
Notice that for a Pauli observable $Q$, there exists $3^{\nqubits - w(Q)}$ Pauli basis measurements that can learn about $\ftarget{Q}$. We also provide notation for set-theoretic operations one can do between pairs of Pauli observables that satisfy $Q,Q' \triangleright P$ for some $B \in \basesset$,
\begin{align*}
  Q \Delta Q' &\eqdef \bigotimes_{i=1}^\nqubits \indic{Q_i = Q'_i} I + \indic{Q_i \neq Q'_i = I} Q_i + \indic{Q'_i \neq Q_i = I} Q'_i, \\
Q \cup Q' &\eqdef \bigotimes_{i=1}^\nqubits \indic{Q_i = Q'_i} Q_i + \indic{Q_i \neq Q'_i = I} Q_i + \indic{Q'_i \neq Q_i = I} Q'_i, \\
Q \cap Q' &\eqdef \bigotimes_{i=1}^\nqubits \indic{Q_i = Q'_i} Q_i + \indic{Q_i \neq Q'_i = I} I + \indic{Q'_i \neq Q_i = I} I,
\end{align*}
where $Q = \bigotimes_{i=1}^{\nqubits} Q_i$ and $Q' = \bigotimes_{i=1}^{\nqubits} Q'_i$. Note, we do not consider the case $I \neq Q_i \neq Q'_i \neq I$ since the two observables cannot belong to the same measurement group if there is a disagreement on the non-identity factors. For convenience, we denote $\qcommuteset$ as the set of $(Q,Q')$ that share at least one measurement group. With these notations, we provide a useful identity that is analogous to set inclusion-exclusion.
\begin{observation}[Inclusion-Exclusion Principle with Pauli Observables]\label{obs:inc-exc}
For $(Q,Q') \in \qcommuteset$, we have the following inclusion-exclusion identity:
\begin{align*}
    w(Q)+w(Q')-w(Q \cup Q') &= \sum_{i=1}^{\nqubits} \indic{Q_i \neq I} + \indic{Q'_i \neq I} - \indic{Q_i \neq I \lor Q \neq I}\\
                           &=  \sum_{i=1}^{\nqubits} \indic{Q_i = Q'_i \neq I} = w(Q \cap Q').
\end{align*}
\end{observation}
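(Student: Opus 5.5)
The statement is the inclusion--exclusion identity of Observation~\ref{obs:inc-exc}. I will prove it coordinatewise and then sum over $i\in[\nqubits]$, since $w(\cdot)$, $\cup$ and $\cap$ are all defined factor by factor. Because $(Q,Q')\in\qcommuteset$, there is a basis $B\in\basesset$ with $Q,Q'\triangleright B$. So at every qubit $i$ the two factors never satisfy $I\neq Q_i\neq Q'_i\neq I$. That is, either they are equal, or exactly one of them is $I$.

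The first step is to check that the definition of $Q\cup Q'$ is well posed and compute its $i$-th factor. In each admissible case exactly one indicator in its definition fires:
\begin{itemize}
\item if $Q_i=Q'_i$, the factor is $Q_i$;
\item if $Q'_i=I\neq Q_i$, the factor is $Q_i$;
\item if $Q_i=I\neq Q'_i$, the factor is $Q'_i$.
\end{itemize}
Hence $(Q\cup Q')_i\neq I$ if and only if $Q_i\neq I$ or $Q'_i\neq I$. This gives $w(Q\cup Q')=\sum_i \indic{Q_i\neq I\lor Q'_i\neq I}$. (The displayed statement writes $Q\neq I$ in the disjunct; it should read $Q'_i\neq I$.)

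The second step is the scalar identity $\indic{A}+\indic{A'}-\indic{A\lor A'}=\indic{A\land A'}$ for events $A=\{Q_i\neq I\}$ and $A'=\{Q'_i\neq I\}$. Summing over $i$ yields the first equality in the statement.

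The third step identifies $\indic{Q_i\neq I\land Q'_i\neq I}$ with $\indic{Q_i=Q'_i\neq I}$. Here compatibility is used again: if both factors are non-identity, they cannot differ, so they are equal. The converse is immediate. Finally, from the definition of $\cap$, the $i$-th factor of $Q\cap Q'$ is $Q_i$ when $Q_i=Q'_i$ and $I$ otherwise. So $(Q\cap Q')_i\neq I$ exactly when $Q_i=Q'_i\neq I$, and summing gives $w(Q\cap Q')$.

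There is no real obstacle here. The only point needing care is invoking membership in $\qcommuteset$ to rule out the case of two distinct non-identity factors. That case is exactly what would break both the well-definedness of $\cup$ and the equivalence in the third step.
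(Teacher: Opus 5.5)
Your proof is correct and follows essentially the same route as the paper: a coordinatewise indicator identity summed over qubits. Membership in $\qcommuteset$ rules out two distinct non-identity factors, which is what gives $\indic{Q_i\neq I}+\indic{Q'_i\neq I}-\indic{Q_i\neq I\lor Q'_i\neq I}=\indic{Q_i=Q'_i\neq I}$; you are also right that the disjunct $Q\neq I$ in the displayed statement is a typo for $Q'_i\neq I$.
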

\subsection{Boolean Analysis and Hypercontractivity} \label{sec:boolean}
We will now discuss some fundamentals of Boolean analysis and hypercontractivity. A more comprehensive review of the topic can be found in~\cite{odonnell2021analysisbooleanfunctions}. A Boolean function is a real-valued function on the Boolean hypercube $\{-1,1\}^\nqubits$. Every Boolean function can be represented as a multi-linear polynomial,
\begin{align*}
    f(x) = \sum_{S \subseteq [\nqubits]} \hat{f}(S) \parity{S}{x},
\end{align*}
where $\hat{f}(S)$ are the Fourier coefficients of $f$. The parity functions $\{\parity{S}{x}\}_{S \subseteq [\nqubits]}$, defined in~\cref{sec:measure}, form an orthonormal basis for Boolean functions under the inner product $\overlapB{f}{g} = \expectDistrOf{x \sim \unif}{f(x) g(x)}$. Under this basis, we have Fourier-analytic identities with inner product and convolution. By Parseval's identity and Convolution Theorem,

\begin{align*}
    \overlapB{f}{g} = \sum_{S \subseteq [\nqubits]} \hat{f}(S) \cdot \hat{g}(S), \quad f * g(x) = \sum_{S \subseteq [\nqubits]} \left(\hat{f}(S) \cdot \hat{g}(S)\right) \parity{S}{x},
\end{align*}
where convolution is defined by $f*g(x) = \expectDistrOf{y \sim \unif}{f(x \oplus y) g(y)}$. The operation $\oplus$ represents bit-wise multiplication. 

\paragraph{Hypercontractivity.} Hypercontractivity is one of the most important results in Boolean analysis, which states that $\ell_p$ norms contract $\ell_{q < p}$ norms significantly when a noisy-operator is acting on the function. The noisy operator of interest is the Bonami-Beckner operator, which flips bits independently with probability $\frac{1}{2} - \frac{1}{2} \lambda$ for $\lambda \in [-1,1]$ and can be represented by the transition probability matrix,
. As a result, there has been large body of work aiming to answer the questi\renewcommand{\arraystretch}{1.25}
\begin{align*}
    T_\lambda = \begin{bmatrix}
        \frac{1}{2} + \frac{1}{2} \lambda & \frac{1}{2} - \frac{1}{2} \lambda \\
        \frac{1}{2} - \frac{1}{2} \lambda & \frac{1}{2} + \frac{1}{2} \lambda
    \end{bmatrix}^{\otimes \nqubits}.
\end{align*}
The operator is also equivalent to
\begin{align*}
    T_\lambda f(x) = \expectDistrOf{y \sim Rad(\lambda)^{\otimes \nqubits}}{f(x \oplus y)} = 2^\nqubits \cdot N_\lambda * f (x),
\end{align*}
where $Rad(\lambda)$ is a Rademacher random variable that takes value $1$ w.p $\frac{1}{2} + \frac{1}{2} \lambda$ and value $-1$ w.p $\frac{1}{2} - \frac{1}{2}\lambda$. $N_\lambda$ is its associated probability mass function. Evaluating the Fourier expansion of $N_\lambda$ yields the following,
\begin{align*}
    \hat{N_\lambda}(S) = \overlapB{N_\lambda}{\chi_{S}} &= \frac{1}{2^\nqubits} \sum_{x \in \{-1,1\}^\nqubits} N_\lambda(x) \parity{S}{x} \\
                                   &= \frac{1}{2^\nqubits} \prod_{i \in S} \expectDistrOf{x_i \sim Rad(\lambda)}{x_i} = \frac{\lambda^{|S|}}{2^\nqubits} 
\end{align*}
By Convolution Theorem, we can see that the Bonami-Beckner operator dampens the higher-degree Fourier coefficients.
\begin{align*}
    T_\lambda f(x) = \sum_{S \subseteq [\nqubits]} \lambda^{|S|} \hat{f}(S) \parity{S}{x}.
\end{align*}
A useful property that arises from this statement is the Markov semi-group property: $T_{\lambda_1 \lambda_2} = T_{\lambda_1} T_{\lambda_2}$. The operator dampens the higher-degree Fourier coefficients of $f$, making the function look more uniform. This results in the larger separation between $\ell_p$ norms. This separation is quantified by the Bonami-Beckner Hypercontractivity Theorem.
\begin{lemma}[{\cite[pg. 284]{odonnell2021analysisbooleanfunctions}}, Hypercontractivity Theorem]\label{lem:hypercontractivity}
    Let  $T_\lambda$ be Bonami-Beckner operator over $\nqubits$ bits. For all $f: \{-1,1\}^\nqubits \rightarrow \mathbb{R}$ and $\lambda \leq \sqrt{\frac{p-1}{q-1}}$ for $1 < p \leq q$,
        \begin{align*}
            \|T_\lambda f\|_q \leq 2^{\left(\frac{1}{q} - \frac{1}{p}\right)\nqubits} \cdot \|f\|_p,
        \end{align*}
        where $\|g\|_p = \left(\sum_{x \in \{-1,1\}^\nqubits} |g(x)|^p\right)^{1/p}$.
\end{lemma}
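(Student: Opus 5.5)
The plan is to reduce the statement to the standard (normalized) Bonami--Beckner hypercontractivity inequality; the factor $2^{(1/q-1/p)\nqubits}$ is purely a change of normalization. Write $\|g\|_{p,\unif} \eqdef \left(\expectDistrOf{x\sim\unif}{|g(x)|^p}\right)^{1/p} = 2^{-\nqubits/p}\|g\|_p$ for the norm under the uniform distribution. If we show $\|T_\lambda f\|_{q,\unif} \le \|f\|_{p,\unif}$ for $|\lambda| \le \sqrt{(p-1)/(q-1)}$, then
\begin{align*}
\|T_\lambda f\|_q = 2^{\nqubits/q}\|T_\lambda f\|_{q,\unif} \le 2^{\nqubits/q}\|f\|_{p,\unif} = 2^{(1/q-1/p)\nqubits}\|f\|_p ,
\end{align*}
which is exactly the claim. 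So all the work is in the normalized inequality, which I would prove in three steps.

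First, I reduce to the extremal noise rate $\lambda_0 = \sqrt{(p-1)/(q-1)}$ and to $\lambda\ge 0$. By the semigroup property, $T_\lambda = T_{\lambda_0} T_{\lambda/\lambda_0}$ with $|\lambda/\lambda_0|\le 1$. Moreover, $T_\mu$ for $|\mu|\le 1$ is an average of the functions $x\mapsto f(x\oplus y)$, so by Jensen it is a contraction in every $\|\cdot\|_{p,\unif}$. For negative $\lambda$, the map $f\mapsto f(-x)$ preserves all norms and converts $T_{-\lambda}$ into $T_\lambda$ (it flips the sign of $\chi_S$ for odd $|S|$). Hence it suffices to treat $\lambda=\lambda_0$.

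Second, and this is the main obstacle, I prove the one-bit ($\nqubits=1$) two-point inequality. Writing $f(x)=a+bx$, it reads
\begin{align*}
\left(\tfrac{|a+\lambda_0 b|^q+|a-\lambda_0 b|^q}{2}\right)^{1/q}\le\left(\tfrac{|a+b|^p+|a-b|^p}{2}\right)^{1/p}.
\end{align*}
I would follow O'Donnell's route. One may assume $a=1$ and $|b|<1$: the cases $a=0$ and $|b|\ge1$ reduce to these by homogeneity and by the triangle inequality $|a\pm b|\ge||a|-|b||$ (swap roles and shrink). Next, handle the case $p\le 2\le q$: reduce to $p=2$ or $q=2$ via duality ($T_\lambda$ is self-adjoint, so the $(p,q)$ bound is equivalent to the $(q',p')$ bound) and composition ($\lambda_0$ factors as the product of the rates for $(p,2)$ and $(2,q)$). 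For $(2,q)$, expand both sides as binomial series in $b$ and compare coefficientwise, using $\binom{q}{2k}\lambda_0^{2k}\le\binom{q/2}{k}$-type estimates. If the general $p<q$ case on one side of $2$ proves messy, I would accept the weaker range needed here, or cite the two-point inequality directly from~\cite{odonnell2021analysisbooleanfunctions}.

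Third, I tensorize by induction on $\nqubits$. Write $T_\lambda = T^{(1)}_\lambda\otimes T^{(\nqubits-1)}_\lambda$ acting on the first coordinate and on the rest. Apply the inductive hypothesis in the last $\nqubits-1$ coordinates for each fixed first bit. Then apply Minkowski's integral inequality (valid since $q/p\ge1$) to swap the order of the $L_q$ and $L_p$ norms. Finally, apply the one-bit inequality to the function of the first bit $x_1\mapsto \|f(x_1,\cdot)\|_{p,\unif}$. This is the standard tensorization argument and is routine once the two-point inequality is in hand.
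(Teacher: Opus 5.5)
The paper gives no proof of its own. It quotes the normalized theorem from O'Donnell, and the factor $2^{(1/q-1/p)N}$ is exactly your change of normalization, so your first paragraph together with that citation is the paper's argument. Everything beyond that is extra, and that is where the problems are.

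You rightly read the hypothesis as $|\lambda|\le\lambda_0$. The literal condition $\lambda\le\lambda_0$ fails for negative $\lambda$: with $N=1$, $f(1)=2$, $f(-1)=0$, $\lambda=-1$ and $p<q$, one gets $\|T_{-1}f\|_q=2>2^{1/q-1/p}\|f\|_p$. This is harmless, since the paper only uses $\lambda>0$.

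\textbf{The $(2,q)$ one-bit case.} This step fails as you state it: it cannot be done by coefficientwise comparison of binomial series. With $\lambda_0^2=1/(q-1)$ you would need $\binom{q}{2k}(q-1)^{-k}\le\binom{q/2}{k}$ for every $k$. That is false in general: for $q=3$, $k=3$ it reads $0\le\binom{3/2}{3}=-\tfrac{1}{16}$. The inequality itself is true, but not termwise, because for non-even $q$ the coefficients of $(1+b^2)^{q/2}$ oscillate in sign.

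The fix is to run the series argument in the dual direction. First reduce to $f=1+bx$ with $|b|\le 1$, which is cleanest via $|T_\lambda f|\le T_\lambda|f|$. For $1<p<2$ every $\binom{p}{2k}$ with $k\ge1$ is positive. Bernoulli's inequality with exponent $2/p\ge1$ then gives $\|1+bx\|_{p,\unif}^2=\bigl(1+\sum_{k\ge1}\binom{p}{2k}b^{2k}\bigr)^{2/p}\ge 1+\tfrac{2}{p}\binom{p}{2}b^2=1+(p-1)b^2=\|1+\lambda_0 bx\|_{2,\unif}^2$. The $(2,q)$ case then follows from the duality you already invoke.

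\textbf{The range.} Even after this repair, your plan covers only $p\le 2\le q$. That range contains the paper's single use ($q=2$, $p=1+\lambda<2$), but not the stated range $1<p\le q$, so accepting the weaker range does not prove the lemma as stated. To get everything:
\begin{itemize}
\item Reduce by duality and composition to $1<p<q\le 2$. There all the coefficients $\binom{q}{2k}$ and $\binom{p}{2k}$ are nonnegative.
\item Concavity gives $(1+S)^{p/q}\le 1+\tfrac{p}{q}S$, where $1+S=\|T_{\lambda_0}f\|_{q,\unif}^q$.
\item The resulting termwise condition $\tfrac{p}{q}\binom{q}{2k}\lambda_0^{2k}\le\binom{p}{2k}$ is equivalent to $\bigl(\tfrac{p-1}{q-1}\bigr)^{k-1}\prod_{j=2}^{2k-1}(j-q)\le\prod_{j=2}^{2k-1}(j-p)$. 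This holds factor by factor, since $0\le j-q\le j-p$ and $\tfrac{p-1}{q-1}<1$.
\end{itemize}

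\textbf{Tensorization.} Your step mixes two versions. In one, the Minkowski swap of the $L_q(x_1)$ and $L_p(x')$ norms (where $q/p\ge1$ is used) is followed by the one-bit inequality applied to $x_1\mapsto f(x_1,x')$ at each fixed $x'$. In the other, also valid, you apply the one-bit inequality to $x_1\mapsto\|f(x_1,\cdot)\|_{p,\unif}$. Its middle step is the triangle inequality in $L_q(x')$ plus positivity of $T^{(1)}_\lambda$, not the swap. Pick one and state its steps consistently.
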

\noindent For this paper, we consider the case when $q=2$, which results in the 2-norm contracting the $1+\lambda^2$ norm.

\subsection{Binomial Tail Bounds} We utilize some commonly used tail bounds for the binomial distribution. We first go over the Chernoff concentration bound.
\begin{lemma} [{\cite[Theorem 1]{ARRATIA1989125}}, Binomial Concentration] \label{lem:binomial-tail-ub}
For $X \sim Bin(\nqubits,p)$ and $\alpha \leq p$,
    \begin{align*}
        \Pr[X \leq \alpha \nqubits] \leq \exp\{-\nqubits D(\alpha || p)\}
    \end{align*}
    for $D(x||y) = x \ln{\frac{x}{y}} + (1-x)\ln \frac{1-x}{1-y}$.
\end{lemma}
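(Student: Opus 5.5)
This is the standard Chernoff bound in relative-entropy form, so I would prove it by the exponential-moment (Chernoff) method applied to the lower tail. Write $X=\sum_{i=1}^{\nqubits} X_i$ with $X_i$ i.i.d.\ Bernoulli$(p)$. The edge cases come first. If $\alpha<0$ the event is empty. If $\alpha=p$ then $D(\alpha\|p)=0$ and the bound is trivial. If $p=1$ and $\alpha<1$, then $D(\alpha\|1)=+\infty$ and $\Pr[X\le \alpha\nqubits]=0$, so the bound holds with the convention $e^{-\infty}=0$. So assume $0\le \alpha<p<1$.

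For any $t\ge 0$, the event $X\le \alpha\nqubits$ is the same as $e^{-tX}\ge e^{-t\alpha\nqubits}$. Markov's inequality and independence then give
\begin{align*}
\Pr[X\le \alpha\nqubits]\le e^{t\alpha\nqubits}\,\mathbb{E}\left[e^{-tX}\right]=\left(e^{t\alpha}\left(1-p+pe^{-t}\right)\right)^{\nqubits}.
\end{align*}
It therefore suffices to show that $\inf_{t\ge 0}\, \left(t\alpha+\ln(1-p+pe^{-t})\right)=-D(\alpha\|p)$.

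The exponent $g(t)=t\alpha+\ln(1-p+pe^{-t})$ is convex in $t$. Its derivative is $g'(t)=\alpha-\frac{pe^{-t}}{1-p+pe^{-t}}$, and $g'(0)=\alpha-p<0$. For $0<\alpha<p$, setting $g'(t)=0$ gives $e^{-t^*}=\frac{\alpha(1-p)}{p(1-\alpha)}$. This ratio is less than $1$ exactly because $\alpha<p$, so $t^*>0$ is admissible. Substituting, $1-p+pe^{-t^*}=\frac{1-p}{1-\alpha}$. Hence
\begin{align*}
g(t^*)=\alpha\ln\frac{p(1-\alpha)}{\alpha(1-p)}+\ln\frac{1-p}{1-\alpha}=-\alpha\ln\frac{\alpha}{p}-(1-\alpha)\ln\frac{1-\alpha}{1-p}=-D(\alpha\|p).
\end{align*}
For $\alpha=0$, let $t\to\infty$: $g(t)\to \ln(1-p)=-D(0\|p)$, which gives $\Pr[X=0]\le(1-p)^{\nqubits}$ (in fact with equality).

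There is no real obstacle here. The only points needing care are checking that the optimizer $t^*$ is nonnegative, which is where the hypothesis $\alpha\le p$ is used, and handling the boundary values of $\alpha$ and $p$. Alternatively, the statement can be cited directly from \cite[Theorem 1]{ARRATIA1989125}.
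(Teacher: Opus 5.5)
Your argument is correct and complete. The Markov step on $e^{-tX}$, the optimizer $e^{-t^*}=\frac{\alpha(1-p)}{p(1-\alpha)}$ (which lies in $(0,1)$ exactly because $\alpha<p$), the evaluation $g(t^*)=-D(\alpha\|p)$, and the boundary cases $\alpha=0$, $\alpha=p$, $p=1$ all check out. The paper gives no proof of its own and simply cites \cite[Theorem 1]{ARRATIA1989125}, which is this same exponential-moment (Chernoff) bound, so your derivation is a self-contained version of the same result.
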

\noindent With Stirling's approximation, we can also give a nearly tight lower bound on the tail.
\begin{lemma}[{\cite[pg. 115]{ash1990information}}, Binomial Anti-concentration]~\label{lem:binomial-tail-lb}
For $X \sim Bin(\nqubits,p)$ and $\alpha \leq p$,
    \begin{align*}
        \Pr[X \leq \alpha \nqubits] \geq \frac{\exp\{-\nqubits D(\alpha || p)\}}{\sqrt{2\nqubits}}
    \end{align*}
    for $D(x||y) = x \ln{\frac{x}{y}} + (1-x)\ln \frac{1-x}{1-y}$.
\end{lemma}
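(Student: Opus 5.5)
We will prove the bound by keeping only a single term of the lower tail and then lower-bounding that term with a Stirling-type estimate for the binomial coefficient. Throughout we take $k \eqdef \alpha\nqubits$ to be an integer with $0 \le k \le \nqubits$. This is how the lemma is applied later, where $\alpha$ can be chosen on the grid $\frac{1}{\nqubits}\mathbb{Z}$. The integrality assumption matters: the binomial theorem gives no room to replace $D(\alpha\|p)$ by $D(\lfloor \alpha\nqubits\rfloor/\nqubits \,\|\, p)$ for free. Since $x \mapsto D(x\|p)$ is decreasing on $[0,p]$, rounding $\alpha\nqubits$ down only increases the divergence, which is the wrong direction. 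Under this convention,
\begin{align*}
\Pr[X \le \alpha\nqubits] \;\ge\; \Pr[X = k] \;=\; \binom{\nqubits}{k} p^{k}(1-p)^{\nqubits-k}.
\end{align*}
Write $h(x) = -x\ln x-(1-x)\ln(1-x)$ for the natural-log entropy. A direct computation gives the identity
\begin{align*}
p^{k}(1-p)^{\nqubits-k}=\exp\{-\nqubits\,(h(\alpha)+D(\alpha\|p))\}.
\end{align*}
It therefore suffices to show $\binom{\nqubits}{k} \ge e^{\nqubits h(\alpha)}/\sqrt{2\nqubits}$.

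\emph{Boundary cases.} If $k\in\{0,\nqubits\}$, then $h(\alpha)=0$ and $\binom{\nqubits}{k}=1 \ge 1/\sqrt{2\nqubits}$, so the claim holds directly. For instance, at $k=0$ the single term is $(1-p)^{\nqubits}=\exp\{-\nqubits D(0\|p)\}$.

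\emph{Interior case $1\le k\le \nqubits-1$.} Here we use the standard Stirling-based estimate, as in \cite{ash1990information}:
\begin{align*}
\binom{\nqubits}{k}\;\ge\;\sqrt{\frac{\nqubits}{8k(\nqubits-k)}}\;e^{\nqubits h(k/\nqubits)}.
\end{align*}
This follows from Robbins' two-sided form of Stirling's formula, $\sqrt{2\pi m}(m/e)^m e^{1/(12m+1)}\le m!\le \sqrt{2\pi m}(m/e)^m e^{1/(12m)}$. Plugging these in yields the factor $\sqrt{\nqubits/(2\pi k(\nqubits-k))}$ multiplied by an exponential correction. Finally, $k(\nqubits-k)\le \nqubits^2/4$ gives $\sqrt{\nqubits/(8k(\nqubits-k))}\ge \sqrt{1/(2\nqubits)}$, which completes the proof once combined with the identity above.

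\emph{Main obstacle.} The delicate step is bounding the constant in the Stirling estimate. A crude treatment loses too much: bounding the correction $e^{1/(12\nqubits+1)-1/(12k)-1/(12(\nqubits-k))}$ simply by $e^{-1/6}$ gives $e^{-1/6}/\sqrt{2\pi}\approx 0.338$, which is slightly below the needed $1/\sqrt 8\approx 0.354$. We will handle this in two steps. First, the worst case $k=1,\nqubits=2$ is verified by hand: it gives $\binom21=2$, which meets the bound with equality. Second, for $\nqubits\ge 3$ the correction exponent is at least $-\tfrac{1}{12}-\tfrac{1}{24}+\tfrac{1}{37} > -0.1$. At the same time, the slack in $k(\nqubits-k)\le \nqubits^2/4$ absorbs the remaining deficit whenever $k$ is near the edges, and for $k$ away from the edges the correction terms are themselves small. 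If this bookkeeping becomes messy, the fallback is to invoke the cited bound of \cite{ash1990information} directly as a black box.
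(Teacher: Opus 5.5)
Your proposal is correct and is essentially the Stirling-formula argument behind the source the paper cites (Ash): keep the single term $\Pr[X=k]$, then use $\binom{\nqubits}{k}\ge e^{\nqubits h(k/\nqubits)}\sqrt{\nqubits/(8k(\nqubits-k))}$ from Robbins' bounds; the interior case in fact closes with no extra slack argument, because for $\nqubits\ge 3$ the correction exponent is smallest at $(\nqubits,k)=(3,1)$, where it is about $-0.098$, and $e^{-0.1}/\sqrt{2\pi}\approx 0.361>1/\sqrt{8}$. Your integrality hypothesis $\alpha\nqubits\in\mathbb{Z}$ is genuinely necessary (for $\nqubits=1$, $p=\alpha=0.9$ the left side is $0.1<1/\sqrt{2}$), but, contrary to your remark, the paper applies the lemma with the fixed constant $\alpha=D^{-1}_{3/4}(\ln 2)$ rather than on the grid $\frac{1}{\nqubits}\mathbb{Z}$, so that application needs a rounding step that costs a constant factor.
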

For the analysis, it is useful to define $D_{y}^{-1}$ as the inverse of the relative entropy $D(\cdot||y)$ in the interval $(0,y]$. For $z \in \left[0, \ln \frac{1}{1-y}\right)$, the unique inverse is defined as
\begin{align*}
    D_{y}(z)^{-1} &= \sup \{x \in (0,y] \mid D(x||y) \geq z\} \\
                  &= \inf \{x \in (0,y] \mid D(x||y) \leq z\}.
\end{align*}
The statement holds true since $D(\cdot||y)$ is continuous and strictly decreasing in the interval $(0,y]$.

%% file: warmup.tex
\section{Warm up with Pauli observable measurements}~\label{sec:warm}
To understand the higher-level ideas of the sublinear protocol, we present and analyze a warm-up fidelity estimator with Pauli observable measurements. The warm-up fidelity protocol achieves the same copy-complexity as~\cite{PhysRevLett.106.230501} but has two advantages. Firstly, the procedure is target-agnostic, so it remains consistent across all target states. In addition, we do not require that the target state is pure. This contrasts the approach in ~\cite{PhysRevLett.106.230501}, where they use importance sampling from the distribution of Pauli coefficients $\{\ftarget{Q}^2\}_{Q \in \mathcal{Q}}$, which remains a valid probability distribution only when $\target$ is pure. Our alternative target-agnostic protocol is described below.

\begin{basicprotocol}{P0: Warm-up Fidelity Protocol}{Pauli Observable Measurements}
 \textbf{Input: } Access to the state description $\{\ftarget{Q}\}_{Q \in \pauliset}$ and measurement access to $\lab^{\otimes n}$. 
 \bigskip
 
 \textbf{Basis Sampling. } Sample $Q_1, Q_2, \ldots, Q_n$ uniformly from $\pauliset$.
\bigskip

\textbf{Measurement postprocessing. } For each $i \in [n]$, measure copy $\lab_i$ with $\POVM_{Q_i}$ and obtain the outcome $X_{Q_i} \in \{-1,1\}$
\bigskip

\textbf{Combining the outcomes. } Return $\fest{\target}{\lab} = \frac{d^{3/2}}{n}\sum_{i=1}^n \ftarget{Q_i} \cdot X_{Q_i}$
\end{basicprotocol}

The procedure is simple. we uniformly sample the Pauli observable measurements and scale each measurement outcome according to $\ftarget{Q}$ to obtain an unbiased estimate of $\overlap{\target}{\lab}$. We will show that the scheme matches the linear rate of~\cite{PhysRevLett.106.230501}.
\begin{observation}[Copy-complexity of Warm-up Protocol]
    Protocol P0 returns valid fidelity estimator with $n = \bigO{\frac{d}{\eps^2}}$ samples.
\end{observation}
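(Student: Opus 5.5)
We will show that each summand $Y_i = d^{3/2}\ftarget{Q_i} X_{Q_i}$ is an unbiased estimate of $\overlap{\target}{\lab}$ with second moment at most $d$. The conclusion then follows from Chebyshev's inequality applied to the empirical mean.

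\textbf{Unbiasedness.} Conditioning on $Q_i$, the outcome satisfies $\mathbb{E}[X_{Q_i}\mid Q_i] = \sqrt{d}\,\flab{Q_i}$ by Born's rule, as computed in the preliminaries. Averaging over the uniform choice of $Q_i$ from the $d^2$ elements of $\pauliset$ gives
\begin{align*}
\mathbb{E}[Y_i] = \frac{1}{d^2}\sum_{Q\in\pauliset} d^{3/2}\ftarget{Q}\sqrt{d}\,\flab{Q} = \sum_{Q\in\pauliset}\ftarget{Q}\flab{Q} = \Tr[\target\lab].
\end{align*}
The last equality is the Pauli-coefficient form of the inner product.

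\textbf{Variance.} Since $X_{Q_i}^2=1$, we have
\begin{align*}
\mathbb{E}[Y_i^2] = \frac{1}{d^2}\sum_{Q\in\pauliset} d^3\ftarget{Q}^2 = d\,\|\hat{\target}\|_2^2 = d\,\Tr[\target^2] \le d.
\end{align*}
The final bound uses \cref{obs:pauli-constraints}, namely $\|\hat\target\|_2\le 1$. This step needs only purity at most one, not purity of $\target$, which is why the protocol also handles mixed targets.

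\textbf{Concentration.} The $Y_i$ are independent, because both the Paulis and the copies are independent. Hence
\begin{align*}
\Var{\fest{\target}{\lab}} \le \frac{d}{n}.
\end{align*}
Chebyshev's inequality then gives
\begin{align*}
\Pr\left[|\fest{\target}{\lab}-\overlap{\target}{\lab}|>\eps\right] \le \frac{d}{n\eps^2} \le \delta
\end{align*}
once $n = d/(\delta\eps^2)$. This is $\bigO{d/\eps^2}$ for constant $\delta$. If $\log(1/\delta)$ dependence is wanted, median-of-means provides it.

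There is no real obstacle here. The one point to check is that the variance bound holds uniformly over all target states, using only the constraint $\|\hat\target\|_2\le 1$, rather than depending on the importance weights used in~\cite{PhysRevLett.106.230501}.
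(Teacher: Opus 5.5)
Your proposal is correct and follows the paper's proof step for step: unbiasedness from $\mathbb{E}[X_Q\mid Q]=\sqrt{d}\,\flab{Q}$ averaged over the $d^2$ Paulis, the second-moment bound $d\|\hat{\target}\|_2^2\le d$ from \cref{obs:pauli-constraints}, and Chebyshev with $n=d/(\delta\eps^2)$. Your explicit remarks on independence of the summands and on median-of-means for $\log(1/\delta)$ dependence make precise what the paper leaves implicit in this proof; the paper only invokes black-box amplification later, for the certification corollary.
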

\begin{proof}
We will prove of the correctness and soundness of this protocol. First, we show the estimator is unbiased,
\begin{align*}
    \expectDistrOf{}{\fest{\target}{\lab}} &= d^{3/2} \cdot \expectDistrOf{}{\ftarget{Q_i} \cdot X_{Q_i}} \\
    &= d^2 \cdot \expectDistrOf{\lab,\unif[\pauliset]}{\ftarget{Q} \cdot \frac{X_{Q}}{\sqrt{d}}} \\
    &= \sum_{Q \in \pauliset} \ftarget{Q} \flab{Q}.
\end{align*}
Now, it suffices to bound the variance of $\fest{\target}{\lab}$
\begin{align*}
    \Var{\fest{\target}{\lab}} &= \frac{d^3}{n} \Var{\ftarget{Q} \cdot X_Q} \\
    &\leq \frac{d^3}{n} \expectDistrOf{\lab, \unif[\pauliset]}{(\ftarget{Q} \cdot X_Q)^2} \\
    &= \frac{d}{n} \sum_{Q \in \mathcal{Q}} \ftarget{Q}^2 \leq \frac{d}{n}
\end{align*}
Setting $n = \frac{d}{\eps^2} \cdot \frac{1}{\delta}$, we have that $\Var{\fest{\target}{\lab}} \leq 
\delta \eps^2$. By Chebyshev's inequality, 
\begin{align*}
Pr[|\Tilde{F}(\target, \lab) - \overlap{\target}{\lab}| \geq \eps] \leq \delta.
\end{align*}
Thus, the protocol returns a valid fidelity estimator with $n = \bigO{\frac{d}{\eps^2}}$ copies.
\end{proof}
The main idea behind the target-agnostic protocol is taking advantage of the p.s.d and purity constraints in the variance of the estimator. It turns out that we can use the exact same idea to establish our sublinear fidelity estimator with Pauli basis measurements! We will discuss this protocol and its guarantees in the next section.

%% file: paulibasis.tex
\section{Sublinear fidelity protocol}
We now discuss the Pauli basis measurement protocol that estimates fidelity with a sublinear number of copies. The first subsection will describe the protocol and go over the high-level variance analysis to show that the variance relies on a particular quantity, which we call the \emph{$\lab$-biased Pauli norm}. Then, the remaining subsections will focus on proving a sublinear bound on the biased Pauli norm.
\subsection{The protocol and variance analysis}\label{sec:variance-analysis}
We now present our sublinear fidelity estimator. In the previous section, we were able to take advantage of the p.s.d and purity constraints of the target state to bound the variance of the target-agnostic protocol. We will show that, with some additional machinery, the same underlying idea will apply for Pauli basis measurements.
\begin{mainprotocol}{P1: Sublinear Fidelity Protocol}{Pauli Basis Measurements}
 \textbf{Input: } Access to the state description $\{\ftarget{Q}\}_{Q \in \pauliset}$ and measurement access to $\lab^{\otimes n}$. 
 \bigskip
 
 \textbf{Basis sampling. } Sample $B_1, B_2, \ldots, B_n$ uniformly from $\basesset= \{X,Y,Z\}^{\otimes \nqubits}$.
\bigskip

\textbf{Measurement postprocessing. } For each $i \in [n]$, measure copy $\lab_i$ with $\POVM_{B_i}$ and obtain the outcome $X_{B_i} \in \{-1,1\}^\nqubits$. Then, construct the shadow overlap estimate
$$s_i =  \frac{1}{\sqrt{d}} \sum_{S \subseteq [\nqubits]} \frac{\ftarget{B_i^S} \parity{S}{X_{B_i}}}{3^{\nqubits - |S|}}.$$

\textbf{Combining the outcomes. } Return $\fest{\target}{\lab} = \frac{3^\nqubits}{n}\sum_{i=1}^n s_i$.
\end{mainprotocol}
We notice that the protocol is similar to the warm-up but with additional post processing steps. Since each Pauli basis learns about $d$ different Pauli coefficients, there exists overlap in the learned coefficients across the the measurement bases. The $\frac{1}{3^{\nqubits - |S|}}$ factor is introduced to normalize these overlaps and ensure that $\fest{\target}{\lab}$ remains an unbiased estimator of the fidelity. Bounding the variance of these shadow overlaps is the main challenge in proving the sublinear rate.
\addtocounter{theorem}{-1}
\begin{theorem}[Sublinear Fidelity Estimation with Pauli Basis Measurements] \label{thm:sublinear-dfe}
    Protocol P1 estimates fidelity with 
    \begin{align*}
    n &= \tildeO{\frac{2^{\gamma \nqubits} \sqrt{5}^{(1-\alpha)(1-\gamma) \nqubits}}{\eps^2}} = o(d^{0.9908}/\eps^2)
    \end{align*}
    copies, where $\gamma  = \frac{1 + \sqrt{33}}{8}$ and $\alpha = D^{-1}_{\frac{3}{4}}(\ln 2)$.
\end{theorem}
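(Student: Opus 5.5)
The plan is the standard one: show that P1 is unbiased, reduce its variance to a single quadratic quantity (the $\lab$-biased Pauli norm), bound that quantity by $\tildeO{2^{\gamma\nqubits}\sqrt5^{(1-\alpha)(1-\gamma)\nqubits}}$, and finish with Chebyshev exactly as in the warm-up protocol P0.

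\textbf{Unbiasedness.} Fix $B$. Using $\expectDistrOf{\lab}{\parity{S}{X_B}}=\sqrt d\,\flab{B^S}$ gives
\[
\expectDistrOf{}{s\mid B}=\sum_S \ftarget{B^S}\flab{B^S}3^{|S|-\nqubits}.
\]
A fixed $Q$ satisfies $Q\triangleright B$ for a uniform $B\in\basesset$ with probability $3^{-w(Q)}$. Averaging over $B$ therefore gives $\expectDistrOf{}{s}=3^{-\nqubits}\sum_Q\ftarget{Q}\flab{Q}$, so $\fest{\target}{\lab}$ is unbiased.

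\textbf{Second moment.} Expanding $s^2$ produces products $\parity{S}{x}\parity{S'}{x}=\parity{S\Delta S'}{x}$. Averaging first over the outcome and then over $B$ (a pair $(Q,Q')\in\qcommuteset$ is jointly compatible with probability $3^{-w(Q\cup Q')}$), and simplifying the exponent with \cref{obs:inc-exc}, I expect
\[
3^{2\nqubits}\expectDistrOf{}{s^2}=\sqrt d\sum_{(Q,Q')\in\qcommuteset}\ftarget{Q}\ftarget{Q'}\,\flab{Q\Delta Q'}\,3^{w(Q\cap Q')}.
\]
I will call this quantity $\biasnorm{\lab}{\target}$. The variance of the final estimator is then at most $\biasnorm{\lab}{\target}/n$. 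It remains to show $\biasnorm{\lab}{\target}\le \tildeO{2^{\gamma\nqubits}\sqrt5^{(1-\alpha)(1-\gamma)\nqubits}}$ uniformly over $\lab$ and over all states $\target$, using only $\|\hat\target\|_2\le1$ and $\|\hat\target\|_\infty\le 1/\sqrt d$ from \cref{obs:pauli-constraints}.

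\textbf{Rewriting $\biasnorm{\lab}{\target}$ as an $\ell_2$-type norm.} Undo the average over outcomes: with $f_B(x)=\sum_S\ftarget{B^S}3^{|S|}\parity{S}{x}$,
\[
\biasnorm{\lab}{\target}=3^{-\nqubits}\sum_B\expectDistrOf{x\sim p_B}{f_B(x)^2},
\]
where $p_B$ is the outcome distribution of $\lab$ in basis $B$. This form uses positivity of $\lab$, since $p_B$ is a genuine probability distribution. I then split $f_B$ according to the weight of $S$ at the threshold $|S|\approx\gamma'\nqubits$ and treat the two parts separately.

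\emph{Low-weight part.} I use only $\|p_B\|_1=1$, so the contribution is bounded by $\|f_B^{\le}\|_\infty^2$. Cauchy--Schwarz then leaves a count of few low-weight Paulis times the $\ell_2$ mass of $\hat\target$. The Pauli weight of a uniform Pauli is $\mathrm{Bin}(\nqubits,3/4)$, so counting Paulis of weight $\le\alpha\nqubits$ via \cref{lem:binomial-tail-ub} and \cref{lem:binomial-tail-lb} is where $\alpha=D^{-1}_{3/4}(\ln 2)$ enters. At that threshold the number of such Paulis is about $4^\nqubits e^{-\nqubits\ln 2}=2^\nqubits$, which balances against the $\ell_\infty$ constraint $\ftarget{Q}^2\le 1/d$.

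\emph{High-weight part.} I apply Hölder, $\expectDistrOf{p_B}{g^2}\le\|p_B\|_{r'}\|g\|_{2r}^2$, and bound $\|p_B\|_{r'}$ through $\|p_B\|_\infty\le1$ and $\|p_B\|_1=1$. I then write $g=T_\lambda h$ with $\hat h(S)=\lambda^{-|S|}3^{|S|}\ftarget{B^S}$ and invoke \cref{lem:hypercontractivity} with $q=2$, $p=1+\lambda^2$ (by duality, to control the $2r$-norm). The factor $3^{|S|}$, averaged over $B$, combines with $\lambda^{-2|S|}$ so that summing over the $3$ choices of each non-identity factor gives per-qubit factors like $\sqrt5$ on the $(1-\alpha)\nqubits$ high-weight coordinates. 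Optimising $\lambda$ against the $2^{(1/q-1/p)\nqubits}$ loss should produce a quadratic in $\gamma$ whose root is $\gamma=(1+\sqrt{33})/8$.

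\textbf{Main obstacle.} The hard step is this high-weight hypercontractive estimate, where one must choose $\lambda$ and the Hölder exponent so that the exponents combine exactly into $2^{\gamma\nqubits}\sqrt5^{(1-\alpha)(1-\gamma)\nqubits}$. I would first try the natural choice $p=1+\lambda^2$, $q=2$, optimise in $\lambda$, and then check numerically that the resulting exponent is below $0.9908\,\nqubits\ln 2$.

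Once $\biasnorm{\lab}{\target}$ is bounded, taking $n=\biasnorm{\lab}{\target}/(\delta\eps^2)$ and applying Chebyshev completes the proof, as in P0.
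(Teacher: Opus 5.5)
There is a genuine gap. Your plan never constrains the lab state across bases, and under the only constraints you allow yourself the target bound is false.

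Any estimate of $\mathbb{E}_{x\sim p_B}[g(x)^2]$ that uses only $\|p_B\|_1=1$ and $\|p_B\|_\infty\le 1$ must stay valid when $p_B$ is a point mass. So at every $x$, your low-weight bound is at least $f_B^{\le}(x)^2$ and your Hölder bound is at least $f_B^{>}(x)^2$. This holds whatever $r$, threshold, or hypercontractive estimate for $\|g\|_{2r}$ you choose.

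Now take $\hat\rho(I)=2^{-N/2}$ and $\hat\rho(Q)=2^{-N-1}$ for all $Q\neq I$. This satisfies $\|\hat\rho\|_2\le 1$ and $\|\hat\rho\|_\infty\le 1/\sqrt d$, which are the only properties of $\rho$ you use. At $x=\mathbf 1$ every character equals $1$, so $f_B(\mathbf 1)=2^{-N/2}+2^{-N-1}(4^N-1)\ge 2^{N-1}$ for every $B$. For each basis, the sum of your two estimates is therefore at least $f^{\le}_B(\mathbf 1)^2+f^{>}_B(\mathbf 1)^2\ge \frac12 f_B(\mathbf 1)^2$.

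The correct normalization of the second moment is $\frac{1}{d\,3^N}\sum_B\mathbb{E}_{x\sim p_B}[f_B(x)^2]$. Your displays have $\sqrt d$ where $1/\sqrt d$ belongs and $3^{-N}$ where $(d\,3^N)^{-1}$ belongs, a consistent factor-$d$ slip. With the correct normalization your bound is at least $\frac{1}{d\,3^N}\cdot 3^N\cdot\frac12\cdot 4^{N-1}=d/8$, so the approach cannot go below linear. The point masses $p_B=\delta_{\mathbf 1}$ are not realizable by any state: they would force $\hat\sigma(Q)=2^{-N/2}$ for all $4^N$ Paulis, i.e.\ $\|\hat\sigma\|_2^2=2^N$. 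Nothing in your argument rules them out, though.

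The missing idea is to make the lab state's Pauli spectrum carry part of the load. The paper's biased norm is $3^{-N}$ times your corrected quantity, and it keeps it in the trilinear form $\frac{1}{\sqrt d}\sum_{(Q,Q')\in\mathcal{S}}\hat\rho(Q)\hat\rho(Q')\hat\sigma(Q\Delta Q')\,3^{-(N-w(Q\cap Q'))}$. It then applies Hölder over compatible pairs with exponents $2/(1+\gamma)$ and $2/(1-\gamma)$, giving $\mathcal{N}\le\sqrt{\mathcal{E}_\gamma(\rho)^{1+\gamma}\mathcal{W}_\gamma(\sigma)^{1-\gamma}/d}$.

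The target factor is a quadratic form of $\hat\rho^{2/(1+\gamma)}$ against $A^{\otimes N}$, where $A$ is the $4\times 4$ compatibility kernel. Dominating $\frac34 A$ by $BSC_p^{\otimes 2}$ and applying hypercontractivity on $2N$ bits is where $(1+\sqrt{33})/8$ comes from. It equals $1-2p^*$ for the largest $p$ with $\frac34 A\preceq BSC_p^{\otimes 2}$; it is not the optimum of a hypercontractive loss traded against a weight cut.

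The lab factor collapses to $(10/3)^N\sum_Z\hat\sigma(Z)^{2/(1-\gamma)}5^{-w(Z)}$. A fractional-knapsack bound on this sum, using $\|\hat\sigma\|_\infty\le 2^{-N/2}$ and $\|\hat\sigma\|_2\le 1$, produces both $\alpha=D^{-1}_{3/4}(\ln 2)$ and the $\sqrt5$. Your plan attributes these two quantities to the target state, which is the wrong side of the estimate.
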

\begin{proof}
    We will proceed to prove the correctness and soundness in similar vein to~\cref{sec:warm}. We first notice the protocol produces an unbiased estimate of $\overlap{\target}{\lab}$.
    \begin{align*}
        \expectDistrOf{}{\fest{\target}{\lab}} &= 3^\nqubits \cdot \expectDistrOf{\lab, \unif[\basesset]}{\frac{1}{\sqrt{d}} \sum_{S \subseteq [\nqubits]} \frac{\ftarget{B^S} \parity{S}{X_{B}}}{3^{\nqubits - |S|}}} \\
        &=3^\nqubits \cdot \expectDistrOf{\unif[\basesset]}{\sum_{S \subseteq [\nqubits]} \frac{\ftarget{B^S} \flab{B^S}}{3^{\nqubits - |S|}}} \\
        &= \sum_{B \in \basesset} \sum_{Q \triangleleft B} \frac{\ftarget{Q} \flab{Q}}{3^{\nqubits - w(Q)}} = \sum_{Q \in \pauliset} \ftarget{Q} \flab{Q},
    \end{align*}
    where the last equality uses the fact that every Pauli observable with $k$ non-identity elements appears in $3^{\nqubits - k}$ measurement groups. Now, we will bound the variance.
    \begin{align*}
        \Var{\fest{\target}{\lab}} &= \frac{9^\nqubits}{n} \Var{\frac{1}{\sqrt{d}} \sum_{S \subseteq [\nqubits]} \frac{\ftarget{B^S} \parity{S}{X_{B}}}{3^{\nqubits - |S|}}} \\
        &\leq \frac{9^\nqubits}{n} \expectDistrOf{\lab, \unif[\basesset]}{\left(\frac{1}{\sqrt{d}} \sum_{S \subseteq [\nqubits]} \frac{\ftarget{B^S} \parity{S}{X_{B}}}{3^{\nqubits - |S|}}\right)^2} \\
        &= \frac{3^\nqubits}{n} \sum_{B \in \mathcal{B}} \expectDistrOf{\lab}{\left(\frac{1}{\sqrt{d}} \sum_{S \subseteq [\nqubits]} \frac{\ftarget{B^S} \parity{S}{X_{B}}}{3^{\nqubits - |S|}}\right)^2}
    \end{align*}
    Most of the analysis for the sublinear protocol is focused on the following expression:
    \begin{align}
        \biasnorm{\lab}{\target} \eqdef  \sum_{B \in \mathcal{B}} \expectDistrOf{\lab}{\left(\frac{1}{\sqrt{d}} \sum_{S \subseteq [\nqubits]} \frac{\ftarget{B^S} \parity{S}{X_{B}}}{3^{\nqubits - |S|}}\right)^2}, \label{def:mixed-norm}
    \end{align}
    which will be referred as the \emph{$\lab$-biased Pauli norm} of $\target$. A simple application of Cauchy Schwarz leads to the following,
   \begin{align*}
       \biasnorm{\lab}{\target} &\leq \frac{1}{d} \sum_{B \in \basesset} \left(\sum_{S \subseteq [\nqubits]} \frac{\ftarget{B^S}^2}{3^{\nqubits - |S|}}\right) \cdot \left(\sum_{j=0}^\nqubits \frac{1}{3^{\nqubits - j}} \binom{N}{j} \right) \\
       &= \frac{d}{3^\nqubits} \sum_{B \in \basesset} \sum_{S \subseteq [\nqubits]} \frac{\ftarget{B^S}^2}{3^{\nqubits - |S|}} = \frac{d}{3^\nqubits} \sum_{Q \in \pauliset} \ftarget{B^S}^2 \leq \frac{d}{3^\nqubits}.
   \end{align*} 
   Substituting this bound, it can be seen that $n = \bigO{d/\eps^2}$ copies is sufficient for Protocol P1 to succeed with high probability. The main question is: 
   \begin{center}
   Can we show $\biasnorm{\lab}{\target} \leq \frac{o(d)}{3^\nqubits}$ with more refined analysis?
   \end{center}
   Given deeper understanding of the underlying structures present in $\biasnorm{\lab}{\target}$, we can indeed give a sublinear bound on the biased Pauli norm.
   \begin{theorem}[Global Bound for $\lab$-Biased Pauli Norm]~\label{thm:bias-norm} Let $\alpha = D^{-1}_{\frac{3}{4}}(\ln 2)$ and $\gamma = \frac{1+\sqrt{3}}{8}$, then
   \begin{align*}
       \max_{\lab, \target \in \statespace} \biasnorm{\lab}{\target} \leq \nqubits^{\frac{3}{4}} \cdot \frac{2^{\gamma \nqubits} \sqrt{5}^{(1-\alpha)(1-\gamma)\nqubits}}{3^\nqubits}.
   \end{align*}
   \end{theorem}
   \noindent This key result is proven in the later sections. With this bound, we can see that
   \begin{align*}
       \Var{\fest{\target}{\lab}} \leq \nqubits^{\frac{3}{4}} \cdot \frac{2^{\gamma \nqubits} \sqrt{5}^{(1-\alpha)(1-\gamma)}}{n}.
   \end{align*}
   Setting $n =\nqubits^{\frac{3}{4}} \cdot \frac{2^{\gamma \nqubits} \sqrt{5}^{(1-\alpha)(1-\gamma)\nqubits}}{\eps^2} \cdot \frac{1}{\delta}$ and applying Chebyshev's inequality proves~\cref{thm:sublinear-dfe}.
\end{proof}
\subsection{Reducing biased Pauli norm to random Pauli graphs} \label{sec:biased-norm}
We have shown that the performance of the protocol relies on $\biasnorm{\lab}{\target}$. In this section, we will provide a deeper understanding of it and show that it can bounded by edge weight of random Pauli graphs. We state the precise result below.
\begin{lemma}[Reducing $\lab$-Biased Pauli Norm to Random Pauli Graphs]~\label{lem:reduction} For $\gamma \in [0,1)$,     
\begin{align*}
    \biasnorm{\lab}{\target} \leq \sqrt{\frac{\randgraph{\target}{\gamma}^{1+\gamma} \cdot \knap{\lab}^{1-\gamma}}{d}},
\end{align*} 
where $\randgraph{\target}{\gamma}$ and $\knap{\lab}$ are defined as
\begin{align*}
    \randgraph{\target}{\gamma} \eqdef \sum_{(Q, Q') \in \qcommuteset} \frac{\ftarget{Q}^{\frac{2}{1+\gamma}} \cdot \ftarget{Q'}^{\frac{2}{1+\gamma}}}{3^{\nqubits - w(Q \cap Q')}}, \; \knap{\lab} \eqdef \sum_{(Q, Q') \in \qcommuteset} \frac{\flab{Q \Delta Q'}^{\frac{2}{1-\gamma}}}{3^{\nqubits - w(Q \cap Q')}}.
\end{align*}
\end{lemma}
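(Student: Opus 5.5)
The plan is to expand the square in the definition \eqref{def:mixed-norm} of $\biasnorm{\lab}{\target}$, turn it into a single weighted sum over commuting pairs $(Q,Q')\in\qcommuteset$, and then split that sum with H\"older's inequality.

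\textbf{Step 1: expand for a fixed basis.} Fix $B\in\basesset$ and write $a_S = \ftarget{B^S}/3^{\nqubits-|S|}$. Expanding the square and using $\parity{S}{x}\parity{S'}{x} = \parity{S\Delta S'}{x}$ gives
\begin{align*}
\expectDistrOf{\lab}{\Big(\tfrac{1}{\sqrt d}\textstyle\sum_S a_S \parity{S}{X_B}\Big)^2} = \frac1d \sum_{S,S'} a_S a_{S'}\, \expectDistrOf{\lab}{\parity{S\Delta S'}{X_B}} = \frac{1}{\sqrt d}\sum_{S,S'} a_S a_{S'}\, \flab{B^{S\Delta S'}}.
\end{align*}
Here I use $\expectDistrOf{\lab}{\parity{S}{x}}=\sqrt d\,\flab{B^S}$ from the preliminaries.

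\textbf{Step 2: resum over pairs of Paulis.} I will set $Q=B^S$ and $Q'=B^{S'}$. Then $B^{S\Delta S'} = Q\Delta Q'$, and the pair $(Q,Q')$ lies in $\qcommuteset$. Conversely, a given pair $(Q,Q')\in\qcommuteset$ (ordered, and including $Q=Q'$) arises from exactly those bases $B$ with $Q,Q'\triangleleft B$. There are $3^{\nqubits-w(Q\cup Q')}$ such bases. Summing over $B$ therefore yields
\begin{align*}
\biasnorm{\lab}{\target} = \frac{1}{\sqrt d}\sum_{(Q,Q')\in\qcommuteset} \ftarget{Q}\ftarget{Q'}\flab{Q\Delta Q'}\cdot \frac{3^{\nqubits-w(Q\cup Q')}}{3^{\nqubits-w(Q)}\,3^{\nqubits-w(Q')}}.
\end{align*}
By the inclusion--exclusion identity (\cref{obs:inc-exc}), $w(Q)+w(Q')-w(Q\cup Q')=w(Q\cap Q')$. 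The weight therefore collapses to $\mu(Q,Q') \eqdef 3^{-(\nqubits-w(Q\cap Q'))}$. These are exactly the weights appearing in $\randgraph{\target}{\gamma}$ and $\knap{\lab}$.

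\textbf{Step 3: H\"older.} I bound the sum by its absolute values. I then apply H\"older's inequality with respect to the measure $\mu$ on $\qcommuteset$, using the conjugate exponents $p=\frac{2}{1+\gamma}$ and $q=\frac{2}{1-\gamma}$. These are conjugate because $\frac1p+\frac1q=\frac{1+\gamma}{2}+\frac{1-\gamma}{2}=1$. I put $|\ftarget{Q}\ftarget{Q'}|$ in the $L^p$ factor and $|\flab{Q\Delta Q'}|$ in the $L^q$ factor. This gives
\begin{align*}
\biasnorm{\lab}{\target}\le \frac{1}{\sqrt d}\Big(\sum \mu\,|\ftarget{Q}\ftarget{Q'}|^{\frac{2}{1+\gamma}}\Big)^{\frac{1+\gamma}{2}}\Big(\sum \mu\,|\flab{Q\Delta Q'}|^{\frac{2}{1-\gamma}}\Big)^{\frac{1-\gamma}{2}} = \sqrt{\frac{\randgraph{\target}{\gamma}^{1+\gamma}\knap{\lab}^{1-\gamma}}{d}}.
\end{align*}
Fractional powers in the definitions are read as powers of absolute values. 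The restriction $\gamma\in[0,1)$ is exactly what keeps $q$ finite.

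\textbf{Main obstacle.} The analytic step is routine. The delicate part is the bookkeeping in Step 2: verifying that the map $(B,S,S')\mapsto(Q,Q')$ hits each pair in $\qcommuteset$ exactly $3^{\nqubits-w(Q\cup Q')}$ times, and that $B^{S\Delta S'}$ coincides with the paper's $Q\Delta Q'$ (positions where the two Paulis agree become $I$). I would check this position by position, using the convention that pairs with $I\neq Q_i\neq Q'_i\neq I$ are excluded.
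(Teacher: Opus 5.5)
Your proposal is correct and follows the paper's proof essentially step for step: expand the square per basis, resum over ordered pairs in $\qcommuteset$ with multiplicity $3^{\nqubits-w(Q\cup Q')}$ (the paper's text says ``exactly $w(Q\cup Q')$ bases'' but then uses the power of three, as you do), collapse the weight to $3^{-(\nqubits-w(Q\cap Q'))}$ via \cref{obs:inc-exc}, and finish with H\"older. Your explicit conjugate exponents $\frac{2}{1+\gamma},\frac{2}{1-\gamma}$ with respect to the measure $\mu$, and your reading of fractional powers as powers of absolute values, make precise what the paper leaves implicit.
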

Notice, $\randgraph{\target}{0}$ is the expected edge weight of a random graph where the vertices $Q \in \pauliset$ are independently sampled from the distribution $\{\ftarget{Q}^2\}_{Q \in \pauliset}$. We will now provide the proof of this graph reduction.
\begin{proof}
We will expand $\biasnorm{\target}{\lab}$,
\begin{align*}
    \biasnorm{\target}{\lab} &=  \sum_{B \in \basesset} \expectDistrOf{\sigma}{\frac{1}{\sqrt{d}} \sum_{S,S' \subseteq [\nqubits]} \frac{\hat{\target}(B^{S}) \cdot \hat{\target}(B^{S'}) \cdot \chi_{S \Delta S'}(X_B)}{\sqrt{d} \cdot 3^{\nqubits - |S|} \cdot 3^{\nqubits - |S'|}}} \\
    &= \frac{1}{\sqrt{d}} \sum_{B \in \basesset} \sum_{S,S' \subseteq [\nqubits]} \frac{\hat{\target}(B^{S}) \cdot \hat{\target}(B^{S'}) \cdot \hat{\lab}(B^{S \Delta S'})}{3^{\nqubits - |S|} \cdot 3^{\nqubits - |S'|}} \\
    &= \frac{1}{\sqrt{d}} \sum_{B \in \basesset} \sum_{Q,Q' \triangleleft B} \frac{\hat{\target}(Q) \cdot \hat{\target}(Q') \cdot \hat{\lab}(Q \Delta Q')}{3^{\nqubits - w(Q)} \cdot 3^{\nqubits - w(Q')}}. 
\end{align*}
We notice that there are exactly $w(Q \cup Q')$ bases that cover both $Q$ 
$Q'$. Thus,
\begin{align*}
    \biasnorm{\target}{\lab} &= \frac{1}{\sqrt{d}}\sum_{(Q,Q') \in \qcommuteset}  \frac{\hat{\target}(Q) \cdot \hat{\target}(Q') \cdot \hat{\lab}(Q \Delta Q')}{3^{\nqubits - w(Q)} \cdot 3^{\nqubits - w(Q')}} \cdot 3^{\nqubits - w(Q \cup Q')} \\
    &= \frac{1}{\sqrt{d}} \sum_{(Q,Q') \in \qcommuteset}\frac{\hat{\target}(Q) \cdot \hat{\target}(Q') \cdot \hat{\lab}(Q \Delta Q')}{3^{\nqubits - (w(Q')+w(Q') - w(Q \cup Q'))}} \\
    &= \frac{1}{\sqrt{d}} \sum_{(Q,Q') \in \qcommuteset}\frac{\hat{\target}(Q) \cdot \hat{\target}(Q') \cdot \hat{\lab}(Q \Delta Q')}{3^{\nqubits - w(Q \cap Q')}},
\end{align*}
where the last line used~\cref{obs:inc-exc}. By Holder's inequality on functions over $\qcommuteset$,
\begin{align*}
 \biasnorm{\target}{\lab} \leq \sqrt{\frac{\randgraph{\target}{\gamma}^{1+\gamma}\cdot \knap{\lab}^{1-\gamma}}{d}}.
\end{align*}
\end{proof}
\noindent In the next two sections, we will bound $\randgraph{\target}{\gamma}$ and $\knap{\lab}$, respectively.
\subsection{Edge weight of random Pauli graphs}\label{sec:graph}
In this section, we will bound the expected edge weight of the random Pauli graph induced by the target state. We will prove the following result.
\begin{theorem}[Edge Weight of Random Pauli Graph]\label{thm:pauli-density} Let $\lambda = \frac{1+\sqrt{33}}{8}$, then
\begin{align*}
    \max_{\target \in \statespace} \randgraph{\target}{\gamma} \leq \left(\frac{4}{3}\right)^\nqubits \cdot 4^{\nqubits\left(\frac{\lambda\gamma-1}{(1+\lambda)(1+\gamma)}\right)}
\end{align*} \\
for $\gamma \in [0,\lambda]$.
\end{theorem}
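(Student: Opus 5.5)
The plan is to turn the pair sum into a sum over measurement bases and then bound each basis term using positivity of the outcome distributions, via \cref{lem:hypercontractivity}. Write $f(Q) = |\ftarget{Q}|^{\frac{2}{1+\gamma}}$ and $p = \frac{2}{1+\gamma} \in [1,2]$. A pair $(Q,Q') \in \qcommuteset$ is covered by exactly $3^{\nqubits - w(Q\cup Q')}$ bases. Combining this with \cref{obs:inc-exc}, the weight satisfies
\begin{align*}
\frac{1}{3^{\nqubits - w(Q\cap Q')}} = 3^{\nqubits - w(Q\cup Q')}\cdot 3^{w(Q)-\nqubits}\cdot 3^{w(Q')-\nqubits}.
\end{align*}
Running the computation in the proof of \cref{lem:reduction} backwards therefore gives
\begin{align*}
\randgraph{\target}{\gamma} = \sum_{B\in\basesset}\Big(\sum_{S\subseteq[\nqubits]} \frac{f(B^S)}{3^{\nqubits-|S|}}\Big)^2 .
\end{align*}
This reduces the problem to bounding a per-basis quantity and then summing over $B$.

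Fix $B$ and let $p_B(x) = \Pr[X_B = x]$ under $\target$. By the preliminaries, $p_B$ is a probability distribution on $\{-1,1\}^\nqubits$ with Fourier coefficients $\hat p_B(S) = \ftarget{B^S}/\sqrt{d}$. This is where the p.s.d.\ constraint enters. Apply \cref{lem:hypercontractivity} with $q=2$ and $p_0 = 1+\lambda^2$, and use $\|p_B\|_{1+\lambda^2}\le \|p_B\|_1 = 1$. By Parseval (in the unnormalized sums of the lemma) this yields the level-weighted bound
\begin{align*}
\sum_{S} \lambda^{2|S|}\,\ftarget{B^S}^2 \;\le\; d\cdot 2^{\nqubits}\cdot 2^{2\left(\frac12 - \frac{1}{1+\lambda^2}\right)\nqubits}\cdot d^{-1}\cdots ,
\end{align*}
that is, a bound of the form $c_\lambda^{\nqubits}$ with $c_\lambda<2$ whenever $\lambda<1$.

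Next I control the inner sum $\sum_S f(B^S) 3^{|S|-\nqubits}$ by H\"older with exponents $\frac{2}{p}$ and $\frac{2}{2-p}$. Splitting $f(B^S)3^{|S|-\nqubits} = \big(\lambda^{2|S|}\ftarget{B^S}^2\big)^{p/2}\cdot \lambda^{-p|S|}3^{|S|-\nqubits}$, the first factor is controlled by the hypercontractive bound above. The second factor is a binomial sum $\sum_j \binom{\nqubits}{j}(\lambda^{-p}3^{j})^{\frac{2}{2-p}}/3^{\frac{2\nqubits}{2-p}}$, which is an explicit exponential in $\nqubits$.

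Squaring and summing over the $3^\nqubits$ bases naively loses too much. Instead I interpolate with the global constraints of \cref{obs:pauli-constraints}, namely $\sum_Q \ftarget{Q}^2\le 1$ and $|\ftarget{Q}|\le d^{-1/2}$. Concretely, I would keep one power of the inner sum and bound it per basis by the hypercontractive estimate. The remaining power, after summing over $B$, becomes $\sum_Q f(Q)\,3^{w(Q)-\nqubits}\cdot(\text{basis count})$, which the purity constraint controls. The $\ell_\infty$ bound absorbs the excess exponent $p-2\le 0$.

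The parameters are then tuned. The exponent $\frac{\lambda\gamma-1}{(1+\lambda)(1+\gamma)}$ in the statement suggests that the H\"older exponent and the noise rate are coupled so that the $\lambda^{-p|S|}$ weights exactly cancel the $3^{|S|}$ growth. Choosing $\lambda=\frac{1+\sqrt{33}}{8}$ should balance the two exponential factors, and the restriction $\gamma\le\lambda$ is where the H\"older exponent stays admissible.

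The main obstacle I expect is this last combination step: splitting the square between the per-basis hypercontractive bound and the global purity bound so that the final exponent matches $\left(\frac43\right)^\nqubits 4^{\nqubits\frac{\lambda\gamma-1}{(1+\lambda)(1+\gamma)}}$ rather than something weaker. I would first try to choose the H\"older exponents so that the binomial sums collapse to closed forms $(1+\lambda^{-a}3^{b})^{\nqubits}$. I would then optimize over $\lambda$ symbolically, checking that the stationarity condition reduces to the quadratic $4\lambda^2-\lambda-2=0$, whose positive root is $\frac{1+\sqrt{33}}{8}$.
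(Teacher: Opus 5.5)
\textbf{There is a genuine gap in the combination step}, and it is a structural problem rather than a matter of tuning constants. Your basis identity $\mathcal{E}_\gamma(\rho)=\sum_B a_B^2$ with $a_B:=\sum_S f(B^S)3^{|S|-N}$ is correct. The trouble is the step you yourself flag: keeping one power per basis and summing the other over $B$ is the inequality $\sum_B a_B^2\le \max_B a_B\cdot\sum_B a_B$, with $\sum_B a_B=\sum_Q |\hat{\rho}(Q)|^{2/(1+\gamma)}$.

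\textbf{The second factor is not controlled for $\gamma>0$.} There the exponent $2/(1+\gamma)$ is below $2$. The constraint $|\hat{\rho}(Q)|\le d^{-1/2}$ then points the wrong way, since it only helps for exponents at least $2$. Purity yields just the H\"older bound $2^{2\gamma N/(1+\gamma)}$, so this sum is not $O(1)$.

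\textbf{The split itself is too lossy, for every $\gamma$.} Take $\rho=(|0\rangle\langle 0|)^{\otimes N}$. The maximum is attained at $B=Z^{\otimes N}$, with $a_B=2^{-N/(1+\gamma)}(4/3)^N$, and $\sum_Q f(Q)=2^{N\gamma/(1+\gamma)}$. The product is therefore $(4/3)^N2^{-N(1-\gamma)/(1+\gamma)}$. This exceeds the claimed bound by the factor $2^{N(1-\lambda)/(1+\lambda)}$ for every $\gamma\in[0,\lambda]$; at $\gamma=0$ it is $(2/3)^N$ versus roughly $0.628^N$. Meanwhile $\mathcal{E}_\gamma$ of this state is only $2^{-N(1-\gamma)/(1+\gamma)}$, so the loss comes from decoupling the two copies of $\nu_\gamma$. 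No choice of H\"older exponents or per-basis hypercontractive estimate can recover the statement along this route. Relatedly, nothing in your scheme produces $4\lambda^2-\lambda-2=0$, so the proposed stationarity check has no support.

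\textbf{The missing idea is to apply hypercontractivity to the quadratic form in the Pauli-coefficient vector itself}, not to the outcome distributions $p_B$. The paper proceeds as follows.
\begin{itemize}
\item It writes $\mathcal{E}_\gamma(\rho)=\langle\nu_\gamma,A^{\otimes N}\nu_\gamma\rangle$, where $A=\frac43 C$ is the single-qubit kernel with rows and columns $X,Y,Z,I$. Each Pauli letter is encoded by two bits.
\item It proves $C\preceq BSC_p^{\otimes 2}$ for $p\le p^*=\frac{7-\sqrt{33}}{16}$, via a $3\times 3$ PSD check after conjugating by $H^{\otimes 2}$.
\item Tensorizing gives $\mathcal{E}_\gamma(\rho)\le(4/3)^N\langle\nu_\gamma,T_\lambda\nu_\gamma\rangle=(4/3)^N\|T_{\sqrt{\lambda}}\nu_\gamma\|_2^2$ on the $2N$-bit cube, with $\lambda=1-2p^*$.
\item $(2,1+\lambda)$-hypercontractivity bounds this by $(4/3)^N4^{N(\lambda-1)/(1+\lambda)}\|\nu_\gamma\|_{1+\lambda}^2$.
\item Finally, $\|\nu_\gamma\|_{1+\lambda}^{1+\lambda}=\sum_Q|\hat{\rho}(Q)|^{2(1+\lambda)/(1+\gamma)}$ is interpolated between $\|\hat{\rho}\|_\infty\le d^{-1/2}$ and $\|\hat{\rho}\|_2\le 1$.
\end{itemize}
The last step is legitimate precisely because the exponent $2(1+\lambda)/(1+\gamma)$ is at least $2$ when $\gamma\le\lambda$. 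That, not a H\"older admissibility condition, is the source of the restriction on $\gamma$. Likewise, $\lambda$ is not chosen to balance exponents. It equals $1-2p^*$, where $p^*$ is the largest crossover probability for which the PSD domination holds. Your quadratic $4\lambda^2-\lambda-2=0$ is just $16p^2-14p+1=0$ rewritten with $p=(1-\lambda)/2$.
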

\begin{proof}
    We utilize the purity constraints of $\target$ to bound the expected edge weight of the graph. We first recognize that the graph has a tensor structure since $Q \cap Q'$ is a qubit-wise operation.
    \begin{align*}
        \randgraph{\target}{\gamma} &= \overlap{\nu_\gamma}{A^{\otimes \nqubits} \nu_\gamma},
    \end{align*}
    where $\nu_\gamma$ is the vectorized version of $\left\{\ftarget{Q}^{\frac{2}{1+\gamma}}\right\}_{Q \in \pauliset}$, and $A$ is the weighted adjacency matrix:
    \renewcommand{\arraystretch}{1.25}
    \begin{align*}
        A &= \begin{bmatrix}
            1 & 0 & 0 & \frac{1}{3} \\
            0 & 1 & 0 & \frac{1}{3} \\
            0 & 0 & 1 & \frac{1}{3} \\
            \frac{1}{3} & \frac{1}{3} & \frac{1}{3} & \frac{1}{3} 
        \end{bmatrix} 
            = \frac{4}{3} \cdot \begin{bmatrix}
            \frac{3}{4} & 0 & 0 & \frac{1}{4} \\
            0 & \frac{3}{4} & 0 & \frac{1}{4} \\
            0 & 0 & \frac{3}{4} & \frac{1}{4} \\
            \frac{1}{4} & \frac{1}{4} & \frac{1}{4} & \frac{1}{4} 
        \end{bmatrix}.
    \end{align*}
    Here, we designate the row and column orders to be $X,Y,Z,I$. Maximizing the edge weight corresponds to having more intersecting non-$I$ terms. With factorization of $\frac{4}{3}$, we see that $A$ can also be interpreted as a transition probability matrix associated with a noisy channel. For entries $\neq I$, the channel acts like an erasure channel with the $I$ character. When the entry is $I$, the channel becomes completely noisy. We propose to bound the channel matrix by the binary symmetric channel to establish hypercontractivity.
    \begin{lemma}[Symmetrizing Noisy Channel]~\label{lem:noise-sym}Let $p \leq \frac{7-\sqrt{33}}{16}$, then
    \begin{align*}
    BSC_{p}^{\otimes 2} \succeq 
         \begin{bmatrix}
           \frac{3}{4} & 0 & 0 & \frac{1}{4}  \\
           0 & \frac{3}{4} & 0 & \frac{1}{4} \\
           0 & 0 & \frac{3}{4} & \frac{1}{4} \\
           \frac{1}{4} & \frac{1}{4} & \frac{1}{4} & \frac{1}{4} \\
        \end{bmatrix},
    \end{align*}
    where $BSC_p = \begin{bmatrix}
        1-p & p \\
        p & 1-p
    \end{bmatrix}$.
    \end{lemma}
    The proof of this lemma consists of performing an optimization over $p$ such that $BSC_p^{\otimes 2} - \frac{3}{4} A \succeq 0$. The proof will be deferred to~\cref{pf:noise-sym}. Fixing $p^* = \frac{7-\sqrt{33}}{16}$, we can bound the random graph quantity,
    \begin{align*}
          \randgraph{\target}{\gamma} &\leq \left(\frac{4}{3}\right)^{\nqubits} \overlap{\nu_\gamma}{BSC_{p^*}^{\otimes 2\nqubits} \nu_\gamma},
    \end{align*}
    since $A \preceq B \implies A^{\otimes k} \preceq B^{\otimes k}$ when $A,B \succeq 0$. In the context of Boolean analysis, the product binary symmetric channel can be interpreted as the Bonami-Beckner operator over $2 \nqubits$ bits, which is defined as $T_{1-2p} = BSC_{p}^{\otimes 2 \nqubits}$. Thus, our inner product can alternatively be written as $\overlap{\nu_\gamma}{T_{\lambda} \nu_\gamma}$, where $\lambda = 1-2p^*  = \frac{1+\sqrt{33}}{8}$.
    
    From~\cref{sec:boolean},  we can see that the operator forms a Markov semi-group, which means that $T_{\lambda_1}T_{\lambda_2} = T_{\lambda_1 \lambda_2}$. Combined with the fact that the operator is symmetric,
    \begin{align*}
         \randgraph{\target}{\gamma} &\leq \left(\frac{4}{3}\right)^{\nqubits} \overlap{\nu_\gamma}{T_{\sqrt{\lambda}} \cdot T_{\sqrt{\lambda}} \nu_\gamma} \\
         &= \left(\frac{4}{3}\right)^{\nqubits} \overlap{T_{\sqrt{\lambda}} \nu_\gamma}{T_{\sqrt{\lambda}} \nu_\gamma} \\
         &=\left(\frac{4}{3}\right)^{\nqubits} \|T_{\sqrt{\lambda}} \nu_\gamma\|_2^2.
    \end{align*}
    Now, we can apply hypercontractivity. By~\cref{lem:hypercontractivity}, the 2-norm under the noisy channel contracts the $1+\lambda$-norm of $\nu_\gamma$.
    \begin{align*}
        \randgraph{\target}{\gamma} &\leq \left(\frac{4}{3}\right)^{\nqubits}  \cdot 4^{\nqubits\left(1 - \frac{2}{1+\lambda}\right)} \cdot \|\nu_{\gamma}\|_{1+\lambda}^2 \\
        &\leq  \left(\frac{4}{3}\right)^{\nqubits} \cdot 4^{N\left(\frac{\lambda - 1}{1+\lambda}\right)} \cdot \|\nu_{\gamma}\|_{\infty}^{2 \left(\frac{\lambda - \gamma}{1+\lambda}\right)} \cdot \|\nu_{\lambda}\|_{1+\lambda}^2 \\
        &= \left(\frac{4}{3}\right)^{\nqubits} \cdot 4^{N\left(\frac{\lambda-1}{1+\lambda}\right)} \cdot \|\hat{\target}\|_{\infty}^{4 \left(\frac{\lambda-\gamma}{(1+\lambda)(1+\gamma)}\right)} \cdot \|\hat{\target}\|_2^{\frac{4}{1+\lambda}}
    \end{align*}
    We can now use the p.s.d and purity constraints to establish the final bound of the expected edge weight of the Pauli graph. By~\cref{obs:pauli-constraints},
    \begin{align*}
        \randgraph{\target}{\gamma} &\leq  \left(\frac{4}{3}\right)^{\nqubits} \cdot 4^{\nqubits\left(\frac{\lambda -1}{1+\lambda} + \frac{\gamma - \lambda}{(1+\lambda)(1+\gamma)}\right)} \\
        &=  \left(\frac{4}{3}\right)^{\nqubits}4^{\nqubits\left(\frac{\gamma \lambda - 1}{(1+\lambda)(1+\gamma)}\right)}.
    \end{align*}
    We have proven~\cref{thm:pauli-density}.
\end{proof}

\subsection{Bounding lab state norm}\label{sec:lab-norm}
We will now bound the lab state norm. 

\begin{lemma}[Bounding Lab State Norm]\label{lem:lab-norm} Let $\alpha = D_{3/4}^{-1}(\ln{2})$. Then,
\begin{align*}
    \max_{\lab \in \statespace} \knap{\lab} \leq N^{3/2} \cdot \left(\frac{10}{3}\right)^{\nqubits} \cdot 2^{-N\left(\frac{\gamma}{1-\gamma}\right)} \cdot 5^{-\alpha \nqubits}.
\end{align*}
\end{lemma}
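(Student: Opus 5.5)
The plan is to first collapse the double sum over $\qcommuteset$ into a single sum over Pauli strings $P = Q \Delta Q'$, which produces a tensor-product weight. After that, the bound follows from the constraints of \cref{obs:pauli-constraints} and a weight-splitting argument that uses the binomial tail bounds.

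\textbf{Step 1: reparametrize by $P = Q\Delta Q'$.} Both the summand and the weight $3^{-(\nqubits - w(Q\cap Q'))}$ factor qubit-wise, so for a fixed $P$ I will count the compatible pairs $(Q_i,Q'_i)$ on each qubit together with their weights.
\begin{itemize}
\item If $P_i = I$, the possibilities are $(I,I)$ with weight $\frac13$, and $(X,X),(Y,Y),(Z,Z)$, each with weight $1$. The total is $\frac{10}{3}$.
\item If $P_i \neq I$, the possibilities are $(P_i,I)$ and $(I,P_i)$, each with weight $\frac13$. The total is $\frac23$.
\end{itemize}
Writing $q = \frac{2}{1-\gamma}$, this gives
\begin{align*}
\knap{\lab} = \sum_{P\in\pauliset} |\flab{P}|^{q}\left(\tfrac{10}{3}\right)^{\nqubits - w(P)}\left(\tfrac{2}{3}\right)^{w(P)} = \left(\tfrac{10}{3}\right)^{\nqubits}\sum_{P\in\pauliset}|\flab{P}|^{q}\, 5^{-w(P)}.
\end{align*}

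\textbf{Step 2: peel off the excess exponent.} By \cref{obs:pauli-constraints} we have $|\flab{P}|^{q} \le \flab{P}^2 \cdot d^{-(q-2)/2}$. Since $(q-2)/2 = \frac{\gamma}{1-\gamma}$, the prefactor is exactly $2^{-\nqubits\gamma/(1-\gamma)}$. It therefore remains to show
\begin{align*}
\sum_{j=0}^{\nqubits} 5^{-j} m_j \le \nqubits^{3/2}\, 5^{-\alpha\nqubits}, \qquad m_j \eqdef \sum_{w(P)=j}\flab{P}^2 .
\end{align*}

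\textbf{Step 3: split by weight at $\alpha\nqubits$.} The weight masses satisfy two bounds:
\begin{itemize}
\item $\sum_j m_j \le 1$, by purity;
\item $m_j \le \binom{\nqubits}{j}3^j/d$, because each coefficient squared is at most $1/d$ and there are $\binom{\nqubits}{j}3^j$ strings of weight $j$.
\end{itemize}
For the heavy part $j \ge \alpha\nqubits$, the first bound gives a contribution of at most $5^{-\alpha\nqubits}$.

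For the light part $j < \alpha\nqubits$, I use the second bound, so each term is at most $2^{-\nqubits}\binom{\nqubits}{j}(3/5)^j$. This sequence increases in $j$ up to roughly $3\nqubits/8$. A quick numerical check gives $\alpha \approx 0.19 < 3/8$, so every light term is at most the term at $j = \alpha\nqubits$, namely
\begin{align*}
5^{-\alpha\nqubits}\cdot 2^{\nqubits}\binom{\nqubits}{\alpha\nqubits}\left(\tfrac34\right)^{\alpha\nqubits}\left(\tfrac14\right)^{(1-\alpha)\nqubits}.
\end{align*}
The binomial point probability here is bounded by $e^{-\nqubits D(\alpha\|3/4)}$, using the Chernoff bound of \cref{lem:binomial-tail-ub} or the standard Stirling estimate. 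By the definition of $\alpha$ this equals $2^{-\nqubits}$, so it cancels the $2^{\nqubits}$.

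Summing at most $\nqubits+1$ such terms gives $O(\nqubits)\,5^{-\alpha\nqubits}$, which fits inside the $\nqubits^{3/2}$ factor. Multiplying back by the prefactors from Steps 1 and 2 yields the lemma.

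\textbf{Main obstacle.} I expect the delicate part to be Step 3. One must check that $\alpha$ lies below the mode $3\nqubits/8$ of $\binom{\nqubits}{j}(3/5)^j$, so that the monotonicity argument applies. One must also handle integrality of $\alpha\nqubits$ and the polynomial factors, which is where I would allow the slack up to $\nqubits^{3/2}$. If the monotonicity argument fails, I would fall back on bounding $\sum_{j<\alpha\nqubits}\binom{\nqubits}{j}3^j/4^{\nqubits}$ directly via \cref{lem:binomial-tail-ub}, and absorb $5^{-j}$ by splitting the range further.
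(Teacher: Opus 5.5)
Your proof is correct, and Steps 1--2 are the same as the paper's. The per-qubit totals $\tfrac{10}{3}$ and $\tfrac{2}{3}$ reproduce its count $3^{-\nqubits}2^{w(Z)}10^{\nqubits-w(Z)}$, and the paper peels off $\|\hat{\lab}\|_\infty^{2\gamma/(1-\gamma)}\le 2^{-\nqubits\gamma/(1-\gamma)}$ the same way.

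Step 3 is a genuinely different route. The paper casts $\max\sum_Z \flab{Z}^2 5^{-w(Z)}$ as a fractional knapsack and identifies the optimizer: the $2^\nqubits$ lowest-weight strings, each filled to capacity $2^{-\nqubits}$. It then needs \emph{both} binomial tail bounds. Chernoff shows the block $\{w(Z)\le\alpha\nqubits\}$ has at most $2^\nqubits$ elements, and the Stirling anti-concentration bound shows it has at least $2^\nqubits/\sqrt{2\nqubits}$. So the optimizer is covered by about $\sqrt{2\nqubits}$ copies of that block, which is where its $\sqrt{2\nqubits}\cdot\alpha\nqubits$ prefactor comes from.

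You never locate the optimizer. Purity alone controls weights $\ge\alpha\nqubits$, and the $\ell_\infty$ constraint plus Chernoff alone controls weights $<\alpha\nqubits$. You therefore dispense with the anti-concentration lemma and obtain $(\lceil\alpha\nqubits\rceil+1)\,5^{-\alpha\nqubits}$, a $\sqrt{\nqubits}$ improvement in the polynomial factor. In exchange, the knapsack view exhibits the extremal coefficient profile, which is what one would use to see that $5^{-\alpha\nqubits}$ is the true exponential rate of this relaxation.

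The integrality concern you raise disappears if you apply Chernoff term by term before using monotonicity. For each integer $j\le\alpha\nqubits$,
$5^{-j}\binom{\nqubits}{j}3^j2^{-\nqubits}\le 2^{\nqubits}\exp\{-\nqubits[\tfrac{j}{\nqubits}\ln 5 + D(\tfrac{j}{\nqubits}\|\tfrac34)]\}$.
The map $x\mapsto -x\ln 5 - D(x\|\tfrac34)$ has derivative $\ln\frac{3(1-x)}{5x}\ge 0$ on $(0,\tfrac38]$, and this interval contains $\alpha\approx 0.189$. Hence every light term is at most $5^{-\alpha\nqubits}$, with no stray factor of $5$ from rounding $\alpha\nqubits$.
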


\noindent We can directly bound the norm directly with a series of combinatorial arguments,
\begin{proof}
We will first simplify $\knap{\lab}$,
\begin{align*}
    \knap{\lab} = \sum_{(Q,Q') \in \qcommuteset} \frac{\hat{\lab}(Q \Delta Q')^{\frac{2}{1-\gamma}}}{3^{\nqubits - w(Q \cap Q')}} = \frac{1}{3^\nqubits} \sum_{Z} \hat{\lab}(Z)^{\frac{2}{1-\gamma}} \sum_{Q \Delta Q' = Z} 3^{w(Q \cap Q')}.
\end{align*}
For a fixed $Z$, $Q,Q'$ must be disjoint along the non-identity positions of $Z$ in order for $Q \Delta Q' = Z$, in which there are $2^{w(Z)}$ ways to split the non-identity entries of $Z$. For the identity positions of $Z$, $Q$ and $Q'$ must be identical to be canceled out under the symmetric difference. Therefore,
\begin{align}
    \knap{\lab} &= \frac{1}{3^\nqubits} \sum_{Z \in \mathcal{Q}} \hat{\lab}(Z)^{\frac{2}{1-\gamma}} \cdot 2^{w(Z)} \sum_{w=0}^{\nqubits-w(Z)} 3^w \cdot 3^w \cdot \binom{N-w(Z)}{w} \\
    &= \left(\frac{10}{3}\right)^\nqubits \sum_{Z \in \mathcal{Q}} \frac{\hat{\lab}(Z)^{\frac{2}{1-\gamma}}}{5^{w(Z)}} \leq \left(\frac{10}{3}\right)^\nqubits \cdot \|\hat{\lab}\|_\infty^{\frac{2 \gamma}{1-\gamma}} \sum_{Z \in \pauliset} \frac{\hat{\lab}(Z)^{2}}{5^{w(Z)}} \\
    &\leq \left(\frac{10}{3}\right)^{\nqubits} \cdot 2^{-\nqubits\left(\frac{\gamma}{1-\gamma}\right)} \sum_{Z \in \pauliset} \frac{\hat{\lab}(Z)^{2}}{5^{w(Z)}} \label{eq:lab-norm},
\end{align}
where we used~\cref{obs:pauli-constraints} on the last line. Now, we will consider a linear program under the p.s.d and purity constraints of $\lab$.
  \begin{alignat*}{2}
        \max_{\{\hat{\lab}(Z)\}_{Z \in \mathcal{Q}}} \quad& \sum_{Z \in \mathcal{Q}} \frac{\hat{\lab}(Z)^{2}}{5^{w(Z)}}\\
        \text{subject to} \quad & \sum_{Z \in \mathcal{Q}} \hat{\lab}(Z)^{2} \leq 1 && \\
        & \forall_{Z} \; \hat{\lab}(Z)^{2} \leq 2^{-\nqubits}. &&
    \end{alignat*}
This is a Fractional Knapsack problem, and the optimal solution is given by sorting the weights $5^{-w(Z)}$ in increasing order and picking the $2^\nqubits$ largest elements at capacity $2^{-\nqubits}$~\cite{Dantzig1957}. We consider the number of elements in the sequence where the weights are less than $\alpha \nqubits$.
\begin{align*}
    \sum_{w=0}^{\alpha \nqubits} 3^w \binom{\nqubits}{w} = 4^\nqubits \Pr[Bin(N,3/4) \leq \alpha \nqubits].
\end{align*}
Using the binomial tail bounds in~\cref{lem:binomial-tail-lb} and~\cref{lem:binomial-tail-ub}, we can closely approximate the number of small weight Pauli observables.
\begin{align*}
    4^\nqubits \exp\left\{-\nqubits D(\alpha || 3/4)\right\} \geq \sum_{w=0}^{\alpha \nqubits} 3^w \binom{\nqubits}{w} \geq \frac{4^\nqubits \exp\left\{-\nqubits D(\alpha || 3/4)\right\}}{\sqrt{2\nqubits}}.
\end{align*}
Setting $\alpha$ to $D_{3/4}^{-1}(\ln(2))$,
\begin{align}
     2^\nqubits \geq \sum_{w=0}^{\alpha \nqubits} 3^w \binom{\nqubits}{w} \geq \frac{2^\nqubits}{\sqrt{2\nqubits}}. \label{eq:num-obs}
\end{align}
The optimal solution to the Fractional Knapsack must be less than $\sqrt{2N}$ repetitions of the sequence of Pauli observables with weight $\leq \alpha \nqubits$, since the largest $2^\nqubits$ elements must contain weights $\geq \alpha \nqubits$.  Thus,
\begin{align*}
    \sum_{Z \in \mathcal{Q}} \frac{\hat{\lab}(Z)^{2}}{5^{w(Z)}} \leq \frac{\sqrt{2\nqubits}}{2^\nqubits} \sum_{w=0}^{\alpha \nqubits} \left(\frac{3}{5}\right)^{w} \binom{\nqubits}{w}.
\end{align*}
We will look at the monotinicity of the sequence $g(w) = \left(\frac{3}{5}\right)^{w} \binom{\nqubits}{w}$. Taking the ratio $\frac{g(w+1)}{g(w)}$, we have
\begin{align*}
    \frac{g(w+1)}{g(w)} = \frac{3}{5} \cdot \frac{\binom{\nqubits}{w+1}}{\binom{\nqubits}{w}} = \frac{3(\nqubits-w)}{5(w+1)}.
\end{align*}
The sequence is increasing whenever $\frac{g(w+1)}{g(w)} \geq 1 \Leftrightarrow w \leq \frac{3\nqubits - 5}{8}$. For $\nqubits \geq 4$, $\alpha \nqubits < 0.19 \nqubits \leq \frac{3\nqubits - 5}{8}$, thus the last term dominates the entire sum,
\begin{align*}
        \sum_{Z \in \mathcal{Q}} \frac{\hat{\lab}(Z)^{2}}{5^{w(Z)}} &\leq \frac{\sqrt{2\nqubits} \alpha \nqubits}{2^\nqubits} \left(\frac{3}{5}\right)^{\alpha \nqubits} \binom{\nqubits}{\alpha \nqubits} \\
        &\leq \frac{N^{3/2}}{2^\nqubits} \cdot 5^{-\alpha \nqubits} \cdot \sum_{w=0}^{\alpha \nqubits } 3^{w} \binom{\nqubits}{w} \\
        &\leq \frac{\nqubits^{3/2}}{2^\nqubits} \cdot 5^{-\alpha \nqubits} \cdot 2^{\nqubits} = \nqubits^{3/2} \cdot 5^{-\alpha \nqubits},
\end{align*}
where the second to last line used~\cref{eq:num-obs}. We obtain the desired bound after substituting the expression into~\cref{eq:lab-norm}.
\end{proof}
\subsection{Putting it all together}
We now conclude the sublinear bound on the biased Pauli norm and prove~\cref{thm:bias-norm}. 
\begin{proof}[Proof of~\cref{thm:bias-norm}]
We will fix $\gamma=\lambda$ for~\cref{lem:reduction} and apply~\cref{thm:pauli-density} and~\cref{lem:lab-norm}. The result is
\begin{align*}
    \biasnorm{\target}{\lab} &\leq \nqubits^{3/4} \sqrt{\left(\frac{4}{3}\right)^{(1+\lambda)\nqubits} \cdot 4^{(\lambda - 1) \nqubits} \cdot \left(\frac{10}{3}\right)^{(1-\lambda)\nqubits} \cdot 2^{-(1+\lambda) \nqubits} \cdot 5^{-\left(\alpha \cdot (1-\lambda)\right) \nqubits}} \\
    &= \frac{\nqubits^{3/4}}{3^\nqubits} \sqrt{4^{(1+\lambda)\nqubits} \cdot 4^{(\lambda-1)\nqubits} \cdot 5^{(1-\lambda)\nqubits} \cdot 2^{(1-\lambda)\nqubits} \cdot 2^{-(1+\lambda) \nqubits} \cdot 5^{-\left(\alpha \cdot (1-\lambda)\right) \nqubits}} \\
    &= \frac{\nqubits^{3/4} \cdot 2^{\lambda \nqubits} \cdot \sqrt{5}^{(1-\lambda)(1-\alpha)\nqubits}}{3^\nqubits}
\end{align*}
Thus, we have proven the sublinear bound shown in~\cref{thm:bias-norm}.
\end{proof}

%% file: additional.tex
\section{Additional proof: noisy channel symmetrization}\label{pf:noise-sym}
We will now provide the proof for the noise channel symmetrization, which allows us to establish the hypercontractive inequality. 
\addtocounter{lemma}{-2}
\begin{lemma}[Symmetrizing Noisy Channel] Let $p \leq \frac{7-\sqrt{33}}{16}$, then
    \begin{align*}
    BSC_{p}^{\otimes 2} \succeq 
         \begin{bmatrix}
           \frac{3}{4} & 0 & 0 & \frac{1}{4}  \\
           0 & \frac{3}{4} & 0 & \frac{1}{4} \\
           0 & 0 & \frac{3}{4} & \frac{1}{4} \\
           \frac{1}{4} & \frac{1}{4} & \frac{1}{4} & \frac{1}{4} \\
        \end{bmatrix},
    \end{align*}
    where $BSC_p = \begin{bmatrix}
        1-p & p \\
        p & 1-p
    \end{bmatrix}$.
\end{lemma}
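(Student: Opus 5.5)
The plan is to verify $D(p) \eqdef BSC_p^{\otimes 2} - \frac{3}{4}A \succeq 0$ directly, by block-diagonalizing the $4\times 4$ matrix using its symmetry. Here $\frac{3}{4}A$ is the stochastic matrix on the right-hand side of the lemma. First I fix the identification of the two-bit strings with the Pauli labels: $X \leftrightarrow 00$, $Y \leftrightarrow 01$, $Z \leftrightarrow 10$, $I \leftrightarrow 11$. This places $I$ antipodal to $X$ and at Hamming distance one from $Y$ and $Z$. Since $T_\lambda$ is invariant under bit permutations and flips, any such labeling gives the same spectrum of $D$ up to the remaining choice, so this choice loses nothing.

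With this labeling, $BSC_p^{\otimes 2}$ has entry $(1-p)^2$ on the diagonal, $p(1-p)$ for strings at distance one, and $p^2$ at distance two. The resulting matrix $D(p)$ is invariant under swapping the $Y$ and $Z$ coordinates. I would therefore split $\mathbb{R}^4$ into two pieces.

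\emph{Antisymmetric piece.} The vector $e_Y - e_Z$ is an eigenvector of $D(p)$ with eigenvalue $(1-p)^2 - \frac{3}{4} - p^2 = \frac{1}{4} - 2p$. This is nonnegative for $p \le \frac{1}{8}$, and $\frac{7-\sqrt{33}}{16} \approx 0.078 < \frac{1}{8}$.

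\emph{Symmetric piece.} On the $3$-dimensional subspace spanned by $e_X$, $(e_Y+e_Z)/\sqrt{2}$ and $e_I$, I write the compressed $3\times 3$ matrix explicitly. Its entries are quadratic polynomials in $p$, for example $(1-p)^2 + p^2 - \frac{3}{4}$ and $\sqrt{2}\,\bigl(p(1-p) - \frac{1}{4}\bigr)$.

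The main step is to show that this $3\times 3$ block is positive semidefinite on $[0, p^*]$, where $p^* = \frac{7-\sqrt{33}}{16}$. My first attempt is to look for a further eigenvector. The vector $e_X - e_I$ combined with the middle direction is a natural candidate, because the $X$ and $I$ rows are related by the complement map $b \mapsto \bar b$ up to the $\frac{3}{4}$ versus $\frac{1}{4}$ diagonal discrepancy. If that fails, I fall back on checking principal minors as functions of $p$.

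At $p=0$ the block is $I - \frac{3}{4}A$ restricted to the subspace. This is PSD because $\frac{3}{4}A$ is a symmetric stochastic matrix and so has spectral norm at most $1$. By continuity, the block stays PSD until its determinant first vanishes. I therefore compute $\det$ as a polynomial in $p$, factor out any trivial factor (for instance a factor that vanishes only at $p=\frac{1}{2}$, or a perfect square), and expect the remaining quadratic factor to be $8p^2 - 7p + \frac{1}{2}$, possibly up to scaling. Its smallest root is $p^* = \frac{7-\sqrt{33}}{16}$, which matches $1-2p^* = \frac{1+\sqrt{33}}{8} = \lambda$.

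To rule out sign changes before $p^*$, I need to check that the determinant is not identically zero in $p$ and has no smaller root in $(0, p^*)$. I also need that the $2\times 2$ leading minors stay positive on that interval, so that no eigenvalue crosses zero while another crosses back. This root bookkeeping is the step I expect to be the main obstacle. If the polynomial does not factor cleanly, I would instead bound the minimum eigenvalue of the $3\times 3$ block via Gershgorin after a diagonal rescaling.

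Combining the two pieces gives $D(p) \succeq 0$ for all $p \le p^*$, since $D(p)$ is PSD on both invariant subspaces.
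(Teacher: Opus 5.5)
\textbf{There is a genuine gap in the main step on the symmetric block.}

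Your symmetry reduction itself is fine and agrees with the paper. Use the ordering $X,Y,Z,I\leftrightarrow 00,01,10,11$, which is the ordering implicit in the paper's conjugation by $H^{\otimes 2}$. Then $e_Y-e_Z$ is the paper's eigenvector $(1,-1,0)$ of $4S_p$, written in the standard basis, with eigenvalue $\frac14-2p$.

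Both $BSC_p^{\otimes 2}$ and the right-hand matrix have all row sums equal to $1$. So $D(p)\mathbf{1}=0$ for every $p$, and $\mathbf{1}=e_X+\sqrt{2}\cdot\frac{e_Y+e_Z}{\sqrt2}+e_I$ lies in your symmetric subspace. Hence your $3\times 3$ block has $(1,\sqrt2,1)$ in its kernel for all $p$, and its determinant vanishes identically. This is exactly the check you flagged, and it fails.

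Consequences:
\begin{itemize}
\item The quadratic $16p^2-14p+1$ does not come out of this determinant.
\item The continuity argument has nothing to run on. Already at $p=0$ the block is only singular PSD (eigenvalues $0,\frac14,1$). Continuity gives no control of the zero eigenvalue, so ``PSD until the determinant first vanishes'' is vacuous.
\item The Gershgorin fallback fails for the same reason. A diagonally rescaled Gershgorin certificate would force the comparison matrix (off-diagonal entries replaced by $-|m_{ij}|$) to be positive semidefinite. But its quadratic form at $(1,\sqrt2,1)$ equals $-8p(1-p)<0$ for $p>0$, because of the positive entry $\sqrt2\,p(1-p)$.
\end{itemize}

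The missing step is to split off $\mathbf 1$ first. In the paper this is the vanishing first row and column of $\Delta_p$ after conjugating by $H^{\otimes 2}$. Take the complement of $(1,\sqrt2,1)$ inside your symmetric subspace, spanned by the orthonormal vectors $u=(e_X-e_I)/\sqrt2$ and $w=(e_X-e_Y-e_Z+e_I)/2$. On it, $D(p)$ acts as
\[
\frac14\begin{bmatrix} 3-8p & -\sqrt2\\ -\sqrt2 & 2-16p+16p^2\end{bmatrix}.
\]
Its determinant is $\frac14(1-2p)(16p^2-14p+1)$, and its diagonal is positive for $p\le \frac{7-\sqrt{33}}{16}$. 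This is the paper's $2\times 2$ matrix, and it gives exactly the threshold.

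Equivalently, in your coordinates all three $2\times 2$ principal minors of the block are positive multiples of $(1-2p)(16p^2-14p+1)$. This holds because the adjugate of a rank-deficient symmetric $3\times 3$ matrix is proportional to $vv^T$ for its kernel vector $v$. So your principal-minor fallback does close the argument, provided you use all principal minors, or a Schur complement against the leading $2\times 2$ block, rather than the determinant. Once $\mathbf 1$ is removed, your route coincides with the paper's.
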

\addtocounter{lemma}{1}
\begin{proof} We will define the following,
\begin{align*}
 C = \begin{bmatrix}
           \frac{3}{4} & 0 & 0 & \frac{1}{4}  \\
           0 & \frac{3}{4} & 0 & \frac{1}{4} \\
           0 & 0 & \frac{3}{4} & \frac{1}{4} \\
           \frac{1}{4} & \frac{1}{4} & \frac{1}{4} & \frac{1}{4}
        \end{bmatrix}.
\end{align*}
The strategy will be to find the largest $p$ such that $BSC_p^{\otimes 2} \succeq C$ for $p \in [0, \frac{1}{2})$ . To simplify this optimization, we will first consider the unitary diagonalization of $BSC_p$,
\begin{align*}
    BSC_p = H \cdot diag(1, 1-2p) \cdot H,
\end{align*}
where $H$ is the Hadamard matrix. Thus, the unitary diagonalization of $BSC_p^{\otimes 2}$ is $H^{\otimes 2} \cdot diag(1, 1-2p)^{\otimes 2} \cdot H^{\otimes 2}$. We will apply this unitary transformation to $C$,
\begin{align*}
    H^{\otimes 2} C H^{\otimes 2} = \begin{bmatrix}
        1 & 0 & 0 & 0 \\
        0 & \frac{1}{2} & -\frac{1}{4} & \frac{1}{4} \\
        0 & -\frac{1}{4} & \frac{1}{2} & \frac{1}{4} \\
        0 & \frac{1}{4} & \frac{1}{4} & \frac{1}{2}
    \end{bmatrix}
\end{align*}
It suffices to show that $H^{\otimes 2} C H^{\otimes 2} \preceq diag(1, 1-2p)^{\otimes 2}$ to show that $ C \preceq BSC_p^{\otimes 2}$ since the condition is equivalent up to rotation. Looking at the difference matrix $\Delta_p \eqdef diag(1, 1-2p)^{\otimes 2}-H^{\otimes 2} C H^{\otimes 2}$,
\begin{align*}
    \Delta_p = \begin{bmatrix}
        0 & 0 & 0 & 0 \\
        0 & \frac{1}{2}-2p & \frac{1}{4} & -\frac{1}{4} \\
        0 & \frac{1}{4} & \frac{1}{2}-2p & -\frac{1}{4} \\
        0 & -\frac{1}{4} & -\frac{1}{4} & \frac{1}{2} - 4p + 4p^2
    \end{bmatrix} \eqdef \begin{bmatrix}
        0 & 0 \\
        0 & S_p
    \end{bmatrix}
\end{align*}
To check $4\Delta_p \succeq 0$, it is only necessary to check the positive semi-definiteness of 
\begin{align*}
4 S_p = \begin{bmatrix}
    2-8p & 1 & -1 \\
    1 & 2-8p & -1 \\
    -1 & -1 & 2-16p+16p^2
\end{bmatrix}.
\end{align*}

\noindent We can see that one of the eigenvectors is $\begin{pmatrix} 1 \\ -1 \\ 0\end{pmatrix}$ with eigenvalue $1-8p$, imposing the condition that $p \leq \frac{1}{8}$. Therefore, the other eigenvectors will be in the orthogonal complement: $span\{w_1, w_2\}$. We choose the orthonormal basis $w_1, w_2$ to be $\begin{pmatrix}
    0 \\ 0 \\ 1
\end{pmatrix}, \frac{1}{\sqrt{2}}\begin{pmatrix}
    1 \\ 1 \\ 0
\end{pmatrix}$. We will now be solving the following eigenvalue problem,
\begin{align*}
    \lambda (c_1 w_1 + c_2 w_2) &= 4S_p (c_1 w_1 + c_2 w_2) \\
    &\boldsymbol{\Longleftrightarrow} \\
    \lambda \begin{bmatrix}
        w_1 & w_2
    \end{bmatrix} \begin{pmatrix}
        c_1 \\ c_2
    \end{pmatrix} &= 4S_p \begin{bmatrix}
        w_1 & w_2
    \end{bmatrix} \begin{pmatrix}
        c_1 \\ c_2
    \end{pmatrix} \\
    &\boldsymbol{\Longleftrightarrow} \\
    \lambda \begin{pmatrix}
        c_1 \\ c_2
    \end{pmatrix} &=  \begin{bmatrix}
        w_1 & w_2
    \end{bmatrix}^T 4S_p \begin{bmatrix}
        w_1 & w_2
    \end{bmatrix} \begin{pmatrix}
        c_1 \\ c_2
    \end{pmatrix}.
\end{align*}
So, it suffices to show that $\begin{bmatrix}
        w_1 & w_2
    \end{bmatrix}^T 4S_p \begin{bmatrix}
        w_1 & w_2
    \end{bmatrix}$. Applying the isometry to $4S_p$ yields
    \begin{align*}
        \begin{bmatrix}
        w_1 & w_2
    \end{bmatrix}^T 4S_p \begin{bmatrix}
        w_1 & w_2
    \end{bmatrix} = \begin{bmatrix}
        2 - 16p + 16p^2 & \sqrt{2} \\
        \sqrt{2} & 3-8p
    \end{bmatrix}.
    \end{align*}
     By generalized Sylvester's criterion, we can check if the determinants of every principal minor is non-negative to ensure positive semi-definiteness. The diagonals (or the $1 \times 1$ principal minors) must be non-negative, so $3-8p \geq 0$ and $2 - 16p + 16p^2 \geq 0$. In addition, the determinant of the whole matrix (the only $2 \times 2$ principal minor) must also be non-negative.
    \begin{align*}
        (3-8p)(2-16p&+16p^2)-2 \geq 0 \\
        &\boldsymbol{\Longleftrightarrow} \\
        (3-8p)(1-8p+8p^2) &- 1 = 2 - 32p + 88p^2 - 64p^3\geq 0
    \end{align*}
    We can factor out the cubic to $2(1-2p)(16p^2 -14p + 1)$. Since $p \in [0, \frac{1}{2})$, $(1-2p) > 0$ can be divided on both sides for the resulting condition, $(16p^2 -14p + 1) \geq 0$. There are $4$ conditions that must be satisfied in order for $BSC_p^{\otimes 2} \succeq C$.
    \begin{align*}
        p &\leq 1/8 \\
        p &\leq 3/8 \\
        2 - 16p &+ 16p^2 \geq 0 \\
        1 - 14p &+ 16p^2 \geq 0
    \end{align*}
    The second condition is vacuous given the first condition.  It suffices to check if $2 - 16p + 16p^2 \geq 0$ and $1 - 14p + 16p^2 \geq 0$. Solving for the roots of the second quadratic, we get
    \begin{align*}
        p = \frac{7 \pm \sqrt{33}}{16}.
    \end{align*}
    In the interval $[0,\frac{1}{2})$, it suffices to have $p \leq  \frac{7 - \sqrt{33}}{16}$ to ensure the quadratic is non-negative. For the first quadratic,
    \begin{align*}
        p = \frac{2 \pm \sqrt{2}}{4}.
    \end{align*}
    are the roots. Similarly, $p \leq \frac{2 - \sqrt{2}}{4}$ suffices to guarantee quadratic is non-negative. Having $p \leq  \frac{7 - \sqrt{33}}{16} < 1/8 < \frac{2 - \sqrt{2}}{4}$ ensures all of the conditions hold. Thus, the proof is complete.
\end{proof}